\documentclass[11pt,a4paper,english]{article}

\usepackage{amsmath,amsfonts,amssymb}
\usepackage[english]{babel}
\usepackage{amsmath}
\usepackage{nicefrac}
\usepackage{amstext}
\usepackage{amsthm,mathrsfs,mathrsfs}
\usepackage{amsmath,amsfonts,colonequals}
\usepackage{framed,footmisc}
\usepackage{natbib}

\usepackage{url}
\usepackage{color}

\theoremstyle{plain} \newtheorem{Def}{Definition}\numberwithin{Def}{section}
\theoremstyle{plain} \newtheorem{Lemma}[Def]{Lemma}\numberwithin{Def}{section}
\theoremstyle{plain} \newtheorem{prop}[Def]{Proposition}\numberwithin{Def}{section}
\theoremstyle{plain} \newtheorem{Theorem}[Def]{Theorem}\numberwithin{Def}{section}
\theoremstyle{plain} \newtheorem{Remark}[Def]{Remark}\numberwithin{Def}{section}
\theoremstyle{plain} \newtheorem{Corollary}[Def]{Corollary}\numberwithin{Def}{section}
\theoremstyle{plain} \newtheorem{example}[Def]{Example}\numberwithin{Def}{section}
\theoremstyle{plain} 
\numberwithin{equation}{section}

\addtolength{\hoffset}{-0.5cm}
\addtolength{\textwidth}{1cm}

\title{The Long-Term Swap Rate and a General Analysis of Long-Term Interest Rates}
\author{\small Francesca Biagini\thanks{Department of Mathematics, LMU University, Theresienstrasse 39, D-80333 Munich, Germany, email: francesca.biagini@math.lmu.de.}\thanks{Secondary affiliation: Department of Mathematics, University of Oslo, Box 1053, Blindern, 0316, Oslo, Norway.}
, Alessandro Gnoatto\thanks{Department of Economics, University of Verona, Via Cantarane 24, 37129 Verona, Italy email: alessandro.gnoatto@univr.it}, Maximilian H\"{a}rtel\footnotemark[1]}

\begin{document}

\maketitle

\begin{abstract}
We introduce here for the first time the long-term swap rate, characterised as the fair rate 
of an overnight indexed swap with infinitely many exchanges.
Furthermore we analyse the relationship between the long-term swap rate, the long-term yield, 
see \citet{BiaginiGnoattoHaertel}, \citet{bh2012}, and \citet{Karoui97}, 
and the long-term simple rate, considered in \citet{Brody_Hughston} as long-term discounting rate.
We finally investigate the existence of these long-term rates  
in two term structure methodologies, the Flesaker-Hughston model and the linear-rational model. A numerical example illustrates how our results can be used to estimate  the non-optional component of a CoCo bond.
\end{abstract}

\small
\par
\noindent
\noindent
\textbf{Keywords:} Term Structure, Overnight Indexed Swap, Long-Term Yield, Long-Term Simple Rate, Long-Term Swap Rate. 
\par
\noindent
\textbf{JEL Classification:} E43, G12, G22.
\par
\noindent
\textbf{Mathematics Subject Classification (2010):} 91G30, 91B70, 60F99.
\normalsize

\section{Introduction}

The modelling of long-term interest rates is a very important topic for financial institutions investing in securities 
with maturities that have a long time-horizon, such as life insurances or infrastructure projects. 
Most articles focusing on long-term interest rate modelling examine the long-term yield, 
defined as the continuously compounded spot rate where the maturity goes to infinity, as discounting rate for these products, 
cf.\,\,\citet{BiaginiGnoattoHaertel}, \citet{bh2012}, \citet{dybvig96}, \citet{Karoui97}, or \citet{Yao2000}. 
An important result which characterises the long-term yield is the Dybvig-Ingersoll-Ross (DIR) theorem, which states that the long-term 
yield is a non-decreasing process. It was first shown in \citet{dybvig96} and then discussed in \citet{article_Goldammer}, 
\citet{article_Hubalek}, \citet{article_Kardaras}, \citet{article_McCulloch}, and \citet{article_Schulze}. 
According to \citet{Brody_Hughston} the DIR theorem ultimately implies that discounted cashflows with higher time-to-maturity are 
over-penalised, so that the use of this long-term interest rate becomes unsuitable for the valuation of projects 
having maturity in a distant future. To overcome this issue, in \citet{Brody_Hughston} the authors propose to use for discounting the long-term 
simple rate, which is defined as the simple spot rate with an infinite maturity. 
Motivated by this ongoing discussion in the literature, we investigate in this paper alternative long-term interest rates.
\par
We introduce here for the first time the long-term swap rate, which we define as the fair fixed rate of a fixed to floating swap with infinitely many exchanges. 
To the best of our knowledge, there has not been any attempt in the literature to study the long-term swap rate so far.
In particular, we focus our attention on swap rates because, unlike $0$-coupon bonds, they are directly  observable on the market.
Our interest in the long term swap rate is also motivated by the observation that some financial products may involve the interchange of cashflows on a possibly unlimited time horizon. 
This is the case of some kind of contingent convertible (CoCo) bonds, which became popular after the financial crisis in 2008. 
Such products are debt instruments issued by credit institutes, which embed the option for the bank to convert debt into equity, typically in order to overcome the situation where the bank is not capitalised enough (cf.\,\,\citet{Coco_1}, \citet{article_BrigoGarciaPede}, \citet{Coco_5}, \citet{Coco_4}, and \citet{Coco_2}).
In the course of the crisis the importance of CoCo bonds for financial institutions to maintain a certain level of capital was pointed out in \citet{Coco_6}. 
In \citet{Coco_7}, the increase in their use  in systemically relevant financial institutions 
was one of three main points that should be realised in the aftermath of the crisis to strengthen the financial system. 
\par
As reported in \citet{article_BrigoGarciaPede}, the value of these instruments may be decomposed as a portfolio consisting of plain bonds and exotic options.
A valuation method for CoCo bonds with finite maturity is presented in \citet{article_BrigoGarciaPede}, 
whereas \citet{Coco_1} also considers the case of unlimited maturity.
Such a result is of practical importance since some of these products offered on the market have maturity equal to infinity (cf.\,\,\citet{Coco_Termsheet}).
In a situation where the CoCo bond has infinite maturity and the coupons of the non-optional part are floating,
it is then natural to ask for an instrument which allows to hedge the interest rate risk involved in the non-optional part of the contract. 
A fixed to floating interest rate swap with infinitely many exchanges could serve as a hedging product for the interest rate risk beared by CoCo bonds. 
The main input for defining such a swap is its fixed rate, i.e.\,\,the long-term swap rate.
Furthermore the long-term swap rate may also play an important role in the context of multiple curve bootstrapping. 
As we shall see in the following, we will concentrate our investigations on overnight indexed swap (OIS) contracts. 
Such OIS contracts constitute the input quotes for bootstrapping procedures which allow for the construction of a discounting curve, 
according to the post-crisis market practice (cf.\,\,for example \citet{cuchieroFontanaGnoatto} or \citet{Book_Henrard}). 
In view of this, the long-term swap rate becomes quite a natural object, from which information on the long-end of the discounting 
curve can be inferred.
\par
The main result of the paper is then the definition of the long-term swap rate $R$ and the study of its properties and relations with the long-term yield and the long-term simple rate.
In particular, we obtain that the long-term swap rate always exists finitely and that this rate is either constant or non-monotonic. 
In the case of a convergent infinite weighted sum $S_{\infty}$ of bonds, we are able to provide an explicit model-independent formula for $R$, 
which is only dependent on $S_{\infty}$, see \eqref{equ:Prop_5_1}.
Hence the long-term swap rate could represent an alternative discounting tool for long-term investments, 
since it is always finite, non-monotonic, can be explicitly characterised, and can be inferred by products existing on the markets. 
\par
As a contribution to the ongoing discussion on suitable discounting factors for investments over long time horizons, 
we then provide a comprehensive analysis of the relations among the long-term yield, the long-term simple rate, and the long-term swap rate in a model-free approach. 
In particular, we study how the existence of one of these long-term rates impacts the existence and finiteness of the other ones. 
This analysis shows the advantage of using the long-term swap rate as discounting rate, since it always remains finite when the other rates may become zero or explode.
\par
The paper is structured as follows. 
First, we introduce in Section \ref{Fixed_Income_Section} some necessary prerequisites, 
such as the different kinds of interest rates and interest rate swaps, in particular OISs. 
Then, Sections \ref{Long_Term_Rates_Section} and \ref{Long_Swap_Rate_Section} describe the three asymptotic rates and some important features of the long-term swap rate like the model-free formula. 
In Section \ref{Relation_between_Long_Term_Rates} we investigate the influence of each long-term rate on the existence and 
finiteness of the other rates.
Finally, in Section \ref{Specific_Term_Structure} we analyse the long-term rates in some selected term structure models. 
We chose the Flesaker-Hughston methodology, developed in \citet{Flesaker_Hughston1996}, 
and the linear-rational term structure methodology, presented in \citet{Filipovic2014},
since they also include the wide class of affine interest models and possess some appealing features such as high tractability and simple forms of the different interest rates. 
In both cases we compute the long-term swap rate and the other long-term rates. We conclude with a numerical example, where we illustrate how our results can be used to estimate  the non-optional component of a CoCo bond.

\section{Fixed Income Setup}\label{Fixed_Income_Section}

\subsection{Interest Rates}\label{Rates}

We now introduce some notations. All quantities in the following are assumed to be associated to a risk-free curve, which, 
in the post-crisis market setting, can be approximated by the overnight curve used in collateralised transactions (cf.\,\,Section 1.1 of \citet{cuchieroFontanaGnoatto}). 
\par
First, we define the contract value of a zero-coupon bond at time $t$ with maturity $T > t$ as $P\!\left(t,T\right)$. 
It guarantees its holder the payment of one unit of currency at time $T$, hence $P\!\left(T,T\right) = 1$ for all $T \geq 0$. 
We assume that there exists a frictionless market for zero-coupon bonds for every time $T > 0$ and that $P\!\left(t,T\right)$ is differentiable in $T$. 
In the following we consider a probability space $\left(\Omega, \mathcal{F}, \mathbb{P}\right)$ endowed with 
the filtration $\mathbb{F} \colonequals \left(\mathcal{F}_{t}\right)_{t \geq 0}$ satisfying the usual hypothesis 
of right-continuity and completeness. 
Furthermore, we only consider finite positive zero-coupon bond prices, i.e.\,\,$0 < P\!\left(t,T\right) < +\infty$ $\mathbb{P}$-a.s.\,\,for all $0 \leq t \leq T$.  
Then, we define the yield for $\left[t,T\right]$ as the continuously compounded spot rate for $\left[t,T\right]$
\begin{equation}\label{Yield_def_1}
 Y\!\left(t,T\right) \colonequals - \frac{\log P\!\left(t,T\right)}{T-t}.
\end{equation}
The simple spot rate for $\left[t,T\right]$ is 
\begin{equation}\label{Simple_def}
 L\!\left(t,T\right) \colonequals  \frac{1}{T - t} \left(\frac{1}{P\!\left(t,T\right)} - 1\right).
\end{equation}
The short rate at time $t$ is defined as 
\begin{equation}\label{Short_def}
 r_{t} \colonequals \lim_{T \downarrow t} Y\!\left(t,T\right)\,\,\,\mathbb{P}\text{-a.s.}
\end{equation}
The corresponding money-market account is denoted by $\left(\beta_{t}\right)_{t \geq 0}$ with 
\begin{equation}\label{MoneyMarket_def}
 \beta_{t} \colonequals \exp\!\left(\int_{0}^{t} r_{s}\, ds\right).
\end{equation}
\par
In particular we assume an arbitrage-free setting, where the discounted bond price process $\frac{P\left(t,T\right)}{\beta_{t}}, t \in \left[0,T\right]$, 
is an $\left(\mathbb{F}, \mathbb{P}\right)$-martingale for all $T > 0$. This implies that the large financial market, consisting of infinitely many bonds, is arbitrage free in the sense of \textit{no asymptotic free lunch with vanishing risk}, see \cite{Cuchiero2016} Assumption 2.2 and \cite{Cuchiero2018}. This also implies that $\beta$ is well-defined, i.e.\,\,$\int_{0}^{t} |r_{s}|\, ds < \infty$ a.s.\,\,for all $t \geq 0$.
We assume to work with the c\`{a}dl\`{a}g version of $\frac{P\left(t,T\right)}{\beta_{t}}, t \in \left[0,T\right]$, for all $T > 0$. 
Consequently $P\!\left(t,T\right), Y\!\left(t,T\right), L\!\left(t,T\right), t \in \left[0,T\right]$, are all c\`{a}dl\`{a}g processes in the sequel. 


\subsection{Interest Rate Swaps}\label{IRS}

Swap contracts are derivatives where two counterparties exchange cashflows. 
There exist different kinds of swap contracts, involving cashflows deriving for example from 
commodities, credit risk or loans in different currencies. As far as interest rate swaps are concerned, 
the evaluation of such claims represents an aspect which is part of the discussion on multiple curve models, 
due to the recent financial crisis. While a survey of the literature on multiple-curve models would be beyond the scope of 
the present paper\footnote{For a complete list of references the interested reader is referred to \citet{cuchieroFontanaGnoatto}.}, 
we limit ourselves to note that even in the post-crisis setting, there are particular types of interest rate swaps whose 
evaluation formulas are equivalent to the ones employed for standard interest rate swaps in the single-curve pre-crisis setting. 
Since such instruments, called OISs, play a pivotal role in the construction of discount curves, 
we concentrate our study on them, and avoid to define a full multiple-curve model.
\par
We consider a infinite tenor structure of the form
\begin{equation}\label{Tenor_2}
0< T_{0} < T_{1} < \dots < T_{n}  <\cdots,
\end{equation}
for $n \in \mathbb{N}$.
We set $\delta_{i} \colonequals T_{i} - T_{i-1}, i\in  \mathbb{N}\! \setminus\! \left\{0\right\}$.
In an OIS contract, floating payments are indexed to a compounded overnight rate like EONIA. 
The variable rate that one party has to pay every time $T_{i}$, $i = 1, 2, \dots$, is $\delta_i \bar{L}\!\left(T_{i-1},T_{i}\right)$ 
with $\bar{L}\!\left(T_{i-1},T_{i}\right)$ denoting the compounded overnight rate for $\left[T_{i-1},T_{i}\right]$. 
This rate is given by (cf.\,\,equation (10) of \citet{FilipovicTrolle}) 
\begin{equation*}
 \bar{L}\!\left(T_{i-1},T_{i}\right) = \frac{1}{\delta_i} \left(\exp\!\left(\int_{T_{i-1}}^{T_{i}} r_{s} \, ds\right) - 1\right)\,.
\end{equation*}
Fixed $n$, the OIS rate for the period $[T_0,T_n]$, i.e.\,\,the fixed rate which makes the OIS value equal to zero at inception, is for $t\leq T_0$
\begin{align}\label{OIS_Rate}
 R^{OIS}\!\left(t;T_{0},T_n\right) & = \frac{\sum_{i=1}^{n} \mathbb{E}^{\mathbb{P}}\!\left[\left.\exp\!\left(-\int_{t}^{T_{i}} r_{s}\,ds\right)\delta_i \bar{L}\!\left(T_{i-1},T_{i}\right) \right| \mathcal{F}_{t}\right]}{\sum_{i=1}^{n} P\!\left(t,T_{i}\right)\delta_i}  \nonumber\\
& = \frac{P\!\left(t,T_{0}\right) - P\!\left(t,T_n\right)}{\sum_{i=1}^{n} P\!\left(t,T_{i}\right)\delta_i}
\end{align}
or in general
\begin{displaymath}
R^{OIS}\!\left(t;T_{0},T_n\right) = \frac{P\!\left(t,T_{0}\right) - P\!\left(t,T_n\right)}{\int_{T_{1}}^{T_n}\! \exp\!\left(-\left(s-t\right) Y\!\left(t,s\right)\right) \xi\!\left(ds\right)}\,,
\end{displaymath}
where $\xi$ is a measure on $\left(\mathbb{R}_{+},\mathcal{B}\!\left(\mathbb{R}_{+}\right)\right)$ and $Y$ is the yield, defined in \eqref{Yield_def_1}.
Note that \eqref{OIS_Rate} corresponds to the formula for the par swap rate in a single curve setting. 
In the following, we consider only OIS swaps and set 
\begin{equation}\label{OIS_Rate_1}
R\!\left(t,T_n\right) \colonequals R^{OIS}\!\left(t;t,T_n\right)
\end{equation}
for all $t\leq T_n$. 

\begin{Remark}
Note that in \eqref{OIS_Rate_1} we have set $T_{0} = t$. 
This is equivalent to consider the interest rate $R\!\left(t,T\right)$ as associated to a rolling over of OIS contracts.
This is possible in our model since we admit the existence of bonds for any maturity $T > 0$.
\end{Remark}

\section{Long-Term Rates}\label{Long_Term_Rates_Section}


In this section we consider some possible long-term rates. 
In particular we focus on the long-term yield and the long-term simple rate, which have been already defined in the literature (cf.\,\,\citet{Karoui97} and \citet{Brody_Hughston}). 
The \emph{long-term yield} can be defined in different ways. Some articles investigate interest rates with a certain time to maturity to approach the 
concept of ``long-term'', e.g.\,\,in \citet{Yao2000} yield curves with time to maturity over 30 years are examined, \citet{Shiller79} considers
yields with a maturity beyond 20 years to be ``long-term'', whereas the ECB takes 10 years as a barrier, cf.\,\,\citet{ECB2015}. 
Another approach is to look at the asymptotic behaviour of the yield curve by letting the maturity go to infinity. 
This approach is used by \citet{BiaginiGnoattoHaertel}, \citet{bh2012}, \citet{dybvig96}, \citet{Karoui97}. 
In line with the above-mentioned principle, we introduce our first object of study, and define the long-term yield 
$\ell \colonequals \left(\ell_{t}\right)_{t\geq 0}$ as 
\begin{equation}\label{Long_Yield}
 \ell_{\,\cdot} \colonequals \lim_{T \rightarrow \infty} Y\!\left(\,\cdot\,,T\right),
\end{equation}
if the limit exists in the sense of the uniform convergence on compacts in probability (convergence in ucp).\footnote{For a definition of the ucp convergence and some additional results the reader is referred to Section \ref{appendix_ucp} in the appendix.} 
If the limit in \eqref{Long_Yield} exists but it is infinite, positive or negative, see Definition \ref{UCP_Def_2}, we will write $\ell = \pm \infty$ for the sake of simplicity. 
We will use this improper notation also for the other long-term interest rates in the sequel of the paper.
We recall that the long-term yield process $\ell$ is a non-decreasing process by the DIR theorem (cf.\,\,Theorem 2 of \citet{dybvig96}), 
which was first proved in \citet{dybvig96} and further discussed in \citet{article_Goldammer}, \citet{article_Hubalek}, and \citet{article_Kardaras}. 
\par
In \citet{Brody_Hughston} it is suggested to consider a particular model for the \emph{long-term simple rate} for the discounting of cashflows occuring in a distant future. 
By using exponential discount factors the discounted value of a long-term project, that will be realised over a long time horizon, 
in most cases will turn out to be overdiscounted, hence too small to justify the overall project costs. 
To overcome this problem, the authors of \citet{Brody_Hughston} came up 
with the concept of ``social discounting'', where the long-term simple rate is employed for discounting cashflows in the distant future. 
To integrate this interesting approach into our considerations, we now define the long-term simple rate process $L \colonequals \left(L_{t}\right)_{t\geq 0}$ as 
\begin{equation*}
 L_{\,\cdot} \colonequals \lim_{T \rightarrow \infty} L\!\left(\,\cdot\,,T\right),
\end{equation*}
if the limit exists in ucp, where $L\!\left(t,T\right)$ is defined in \eqref{Simple_def}. 
Note that $L_{t} \geq 0$ $\mathbb{P}$-a.s.\,\,for all $t\geq 0$ by \eqref{Simple_def}.
\par
We introduce the process 
$P \colonequals \left(P_{t}\right)_{t\geq 0}$ as
\begin{equation}\label{Long_Bond}
 P_{\,\cdot} \colonequals \lim_{T \rightarrow \infty} P\!\left(\cdot,T\right),
\end{equation}
if the limit exists (finite or infinite) in ucp.  For an alternative definition of long bond, see  \cite{Qin-Linetsky}.

\begin{Remark}\label{Remark_1}
We note that as a consequence of our assumption that the bond prices are c\`{a}dl\`{a}g, we also obtain that all the 
long-term rates introduced above and in Section \ref{Long_Swap_Rate_Section} are c\`{a}dl\`{a}g. In the sequel we will then use Theorem 2 of Chapter I, Section 1 
of \citet{Book_Protter}, which tells us that for two right-continuous stochastic processes $X$ and $Y$ it holds that 
$X_{t} = Y_{t}$ $\mathbb{P}$-a.s.\,\,for all $t \geq 0$ is equivalent to $\mathbb{P}$-a.s.\,\,for all $t \geq 0$, $X_{t} = Y_{t}$.
\end{Remark}

We define $S_{n} \colonequals \left(S_{n}\!\left(t\right)\right)_{t \geq 0}$ with
\begin{equation*}
 S_{n}\!\left(t\right) \colonequals\int_{T_{1}}^{T_{n}}\! \exp\!\left(-\left(T-t\right)Y\!\left(t,T\right)\right) \xi\!\left(dT\right), t\geq 0,
\end{equation*}
considering a tenor structure with infinite many exchange dates. It is clear that
\begin{equation}\label{Sum_Bond_Prices}
 S_{n}\!\left(t\right)= \sum_{i = 1}^{n} \delta_i P\!\left(t,T_{i}\right), t\geq 0,
\end{equation}
if $\xi\!\left(dT\right)=\sum_{i=1}^{+\infty} \delta_i \delta_{\left\{T_i\right\}} $, with $\delta_{i} \colonequals T_{i} - T_{i-1}$ and Dirac's delta functions $\delta_{\left\{T_i\right\}}$.
Then the limit 
\begin{equation}\label{Sum_Infinite_Bond_Prices}
\lim_{n \rightarrow \infty} S_{n}\!\left(\cdot\right)
\end{equation}
in ucp always exists, finite or infinite. 
In the sequel we denote this limit by $S_{\infty}$ if it exists and is finite.
All bond prices are strictly positive, therefore for all $t \geq 0, n \in \mathbb{N}$ 
we have $\mathbb{P}$-a.s.\,\,$S_{n}\!\left(t\right) > 0$ and $S_{\infty}\!\left(t\right) > 0$.

\section{The Long-Term Swap Rate}\label{Long_Swap_Rate_Section}

We now introduce the \emph{long-term swap rate} $R \colonequals \left(R_{t}\right)_{t\geq 0}$ as
\begin{equation*}
 R_{\,\cdot\,} \colonequals \lim_{n \rightarrow \infty} R\!\left(\,\cdot\,,T_{n}\right) 
\end{equation*}
if the limit exists in ucp, where $R\!\left(t,T_n\right)$ is defined in \eqref{OIS_Rate_1}. 
The long-term swap rate, defined here for the first time, can be understood as the fair fixed rate of an OIS starting in $t$ that has a payment stream 
with infinitely many exchanges. This fixed rate is meant to be fair in the sense that the initial value of this OIS equals zero.
\par
We investigate the existence and finiteness of the long-term swap rate. 
We first provide a model-free formula for the swap rate, when $S_{\infty}$ exists and is finite. 
In particular, we focus here on the case when $S_n$ is given by \eqref{Sum_Bond_Prices} and  the tenor structure is such that
\begin{equation}\label{tenor_nondeg}
c < \inf_{i \in \mathbb{N} \setminus\! \left\{0\right\}} \left(T_{i} - T_{i-1}\right) 
\end{equation}
with $c >0$.  This hypothesis avoids the degenerated case where $|T_i -T_{i-1}|\rightarrow 0$ for $i\rightarrow \infty$, and corresponds to the realistic setting of a fixed tenor (but where the number of dates may become very large).

This section relies on some properties of $S_{\infty}$, which we proved in Appendix A.

\begin{Theorem}\label{Prop_5}
Assume that $S_n$ is defined as in \eqref{Sum_Bond_Prices} for $n\in \mathbb{N}$ and the tenor structure satisfy condition \eqref{tenor_nondeg}.
\begin{itemize}
\item[(i)] 
  If $S_{n} \overset{n \rightarrow \infty}{\longrightarrow} S_{\infty}$ in ucp, then $\mathbb{P}$-a.s.
  \begin{equation}\label{equ:Prop_5_1}
   R_{t} = \frac{1}{S_{\infty}\!\left(t\right)} > 0
  \end{equation}
  for all $t \geq 0$.
\item[(ii)] 
  If $S_{n} \overset{n \rightarrow \infty}{\longrightarrow} +\infty$ in ucp and
  $P$, defined in \eqref{Long_Bond}, exists finitely, then 
  it holds $R_{t} = 0$ $\mathbb{P}$-a.s.\,\,for all $t\geq 0$.
\end{itemize}
\item[(iii)]
  The long-term swap rate cannot explode, i.e.\,\,$\mathbb{P}\!\left(\left\vert R_{t}\right\vert < +\infty\right) =1$    for all $t \geq 0$.
\end{Theorem}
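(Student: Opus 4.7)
The common starting point for all three parts is the identity
\[
R(t,T_n) = \frac{1 - P(t,T_n)}{S_n(t)},
\]
obtained by specialising \eqref{OIS_Rate} to $T_0 = t$ (so that $P(t,T_0) = 1$) and using the representation \eqref{Sum_Bond_Prices} for $S_n$. Each of the three statements then reduces to a ucp limit computation for this ratio, combined with the right-continuity argument of Remark \ref{Remark_1} to pass from a ucp identification to the pointwise ``for all $t$'' formulation.

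For part (i), the key preliminary step is to show $P(\,\cdot\,,T_n) \to 0$ in ucp. The ucp convergence of $S_n$ forces the telescoping increments $S_n - S_{n-1} = \delta_n\, P(\,\cdot\,,T_n)$ to vanish in ucp, and the tenor bound $\delta_n \geq c > 0$ then yields $P(\,\cdot\,,T_n) \to 0$ in ucp. Writing
\[
R(t,T_n) - \frac{1}{S_\infty(t)} = \frac{\bigl(S_\infty(t) - S_n(t)\bigr) - P(t,T_n)\, S_\infty(t)}{S_n(t)\, S_\infty(t)},
\]
I would then invoke the algebra of ucp convergence: the numerator vanishes in ucp, while the denominator converges in ucp to $S_\infty(t)^2$, which is a.s.\ strictly positive by the properties of $S_\infty$ proved in Appendix A. This gives $R(\,\cdot\,,T_n) \to 1/S_\infty$ in ucp, hence formula \eqref{equ:Prop_5_1}, and strict positivity of $R_t$ follows from $0 < S_\infty(t) < \infty$.

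For part (ii), the numerator $1 - P(t,T_n)$ converges in ucp to the a.s.\ finite random variable $1 - P_t$ by hypothesis, while $S_n \to +\infty$ in ucp (interpreted via Definition \ref{UCP_Def_2}) yields $1/S_n(t) \to 0$ in ucp. The product of these ucp-convergent sequences converges in ucp to $0$, and Remark \ref{Remark_1} then promotes this to $R_t = 0$ a.s.\ for all $t \geq 0$.

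For part (iii), the plan is to extract a single a.s.\ finite upper bound on $|R(t,T_n)|$ that is uniform in $n$. Splitting the numerator,
\[
|R(t,T_n)| \leq \frac{1}{S_n(t)} + \frac{P(t,T_n)}{S_n(t)},
\]
each summand is controlled by isolating a single term in the denominator sum: $S_n(t) \geq \delta_1 P(t,T_1)$ gives $1/S_n(t) \leq 1/(\delta_1 P(t,T_1))$, while $S_n(t) \geq \delta_n P(t,T_n)$ gives $P(t,T_n)/S_n(t) \leq 1/\delta_n \leq 1/c$. Thus $|R(t,T_n)|$ is dominated, uniformly in $n$, by the a.s.\ finite random variable $1/(\delta_1 P(t,T_1)) + 1/c$, and any ucp (and hence, along a subsequence, a.s.) limit inherits this bound. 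The main obstacle I anticipate is this uniform-bound step in (iii): the ucp computations in (i) and (ii) are essentially routine once the quotient and product rules of ucp convergence are invoked, but exploiting the sum structure of $S_n$ via two distinct lower bounds (one fixed at $i=1$, one at $i=n$) is the genuinely non-trivial idea that drives the theorem.
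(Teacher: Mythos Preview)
Your proposal is correct and follows essentially the same approach as the paper: the same decomposition $R(\cdot,T_n)=\frac{1}{S_n}-\frac{P(\cdot,T_n)}{S_n}$, the same continuous-mapping/algebra-of-ucp argument for (i) and (ii), and the same idea in (iii) of bounding $P(\cdot,T_n)/S_n$ by isolating the $i=n$ summand (the paper phrases this as $P(\cdot,T_n)/S_n = 1 - S_{n-1}/S_n \le 1$, modulo the $\delta_n$ factor). The only cosmetic differences are that you spell out the telescoping argument $P(\cdot,T_n)\to 0$ inside (i) (the paper defers it to Proposition~\ref{Lemma_4} in the appendix) and that in (iii) you also make the bound on $1/S_n$ via the $i=1$ summand explicit, which the paper leaves implicit.
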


\begin{proof}
To (i): We have that in ucp
\begin{align*}
 \lim_{n \rightarrow \infty} R\!\left(\,\cdot\,,T_{n}\right) & \overset{\eqref{OIS_Rate_1}}{=} 
 \lim_{n \rightarrow \infty} \frac{1 - P\!\left(\,\cdot\,,T_{n}\right)}{ S_{n}\!\left(\,\cdot\,\right)} 
= \lim_{n \rightarrow \infty} \frac{1}{S_{n}\!\left(\,\cdot\,\right)} - \lim_{n \rightarrow \infty} \frac{P\!\left(\,\cdot\,,T_{n}\right)}{S_{n}\!\left(\,\cdot\,\right)} 
\nonumber\\
 & \overset{\phantom{\eqref{OIS_Rate_1}}}{=} \lim_{n \rightarrow \infty} \frac{1}{S_{n}\!\left(\cdot\right)} 
 = \frac{1}{S_{\infty}\!\left(\,\cdot\,\right)} > 0
\end{align*}
by Theorem \ref{UCP_Thm_1}.
\newline
To (ii): We have that in ucp
\begin{align*}
  \lim_{n \rightarrow \infty} R\!\left(\,\cdot\,,T_{n}\right) & \overset{\eqref{OIS_Rate_1}}{=}
 \lim_{n \rightarrow \infty} \frac{1 - P\!\left(\,\cdot\,,T_{n}\right)}{S_{n}\!\left(\,\cdot\,\right)} \nonumber\\
 & \overset{\phantom{\eqref{OIS_Rate_1}}}{=}  \lim_{n \rightarrow \infty} \frac{1}{S_{n}\!\left(\,\cdot\,\right)} - \lim_{n \rightarrow \infty} \frac{P\!\left(\,\cdot\,,T_{n}\right)}{S_{n}\!\left(\,\cdot\,\right)} \nonumber\\
 & \overset{\phantom{\eqref{OIS_Rate_1}}}{=}   \lim_{n \rightarrow \infty} \frac{1}{S_{n}\!\left(\cdot\right)} - \lim_{n \rightarrow \infty} \frac{P\!\left(\,\cdot\,,T_{n}\right)}{S_{n}\!\left(\,\cdot\,\right)} \nonumber\\
 & \overset{\phantom{\eqref{OIS_Rate_1}}}{=}   - \lim_{n \rightarrow \infty} \frac{P\!\left(\,\cdot\,,T_{n}\right)}{S_{n}\!\left(\,\cdot\,\right)}
 = - P_{\,\cdot\,}\lim_{n \rightarrow \infty} \frac{1}{S_{n}\!\left(\cdot\right)} = 0
\end{align*}
by Theorem \ref{UCP_Thm_1}.
\newline
To (iii): Since (i) and (ii) hold, we need only to study the case when $S_{n} \overset{n \rightarrow \infty}{\longrightarrow} +\infty$ in ucp and
  $P=+\infty$. We have that in this case\begin{equation*}
  \lim_{n \rightarrow \infty} R\!\left(\,\cdot\,,T_{n}\right)  =- \lim_{n \rightarrow \infty} \frac{P\!\left(\,\cdot\,,T_{n}\right)}{S_{n}\!\left(\,\cdot\,\right)}
\end{equation*}
   in ucp. We  note that $\mathbb{P}$-a.s.
\begin{equation*}
 0 \leq \sup_{0\leq s\leq t} \frac{P\!\left(s,T_{n}\right)}{S_{n}\!\left(s\right)} = \sup_{0\leq s\leq t}\frac{1}{\delta_n} \left(1 - \frac{S_{n-1}\!\left(s\right)}{S_{n}\!\left(s\right)}\right) \leq 1/c
\end{equation*}
for all $t\geq 0$ with $\delta_n=T_n-T_{n-1}$. 
Hence 
\begin{equation*}
 \mathbb{P}\!\left(\inf_{0\leq s\leq t}\frac{P\left(s,T_{n}\right)}{S_{n}\!\left(s\right)} > M\right) \overset{n \rightarrow \infty}{\longrightarrow} 0
\end{equation*}
for all $M > 1/c$. This contradicts Definition \ref{UCP_Def_2} of ucp convergence to $+\infty$ applied to $|R|$, so the long-term swap rate always exists and is finite $\mathbb{P}$-a.s..
\end{proof}

\begin{Remark}\label{consol}
If now we consider a tenor structure with $T_{i} - T_{i-1} = \delta$ for all $i \in \mathbb{N}\! \setminus\! \left\{0\right\}$,
then $S_{n}\!\left(t\right) = \delta \sum_{i = 1}^{n}\! P\!\left(t,T_{i}\right)$ and \eqref{equ:Prop_5_1} boils down to
\begin{displaymath}
R_{t} = \frac{1}{\delta \sum_{i = 1}^{\infty}\! P\!\left(t,T_{i}\right)}\,,
\end{displaymath}
i.e.\,\,$R_{t}$ is proportional to the consol rate of a perpetual bond.
For more details on consol bonds, we refer to \citet{Delsaen}, \citet{DuffieJinMaYong}, and the references therein. 
However, our construction is more general and goes beyond the existence of the consol bond rate.
Our approach has the advantage of being consistent with the multi-curve theory of interest rate modelling as well as of delivering  a long-term interest rate which is always finite.
\end{Remark}

By Theorem \ref{Prop_5} (i) and (ii) we obtain the existence of the long-term swap rate as a finite limit 
if $P$ exists finitely. However this result always holds as shown by Theorem \ref{Prop_5} (iii).

\begin{Corollary}\label{Cor_7}
If $S_{n} \overset{n \rightarrow \infty}{\longrightarrow} +\infty$ in ucp, then 
it holds $-\infty < R_{t} \leq 0$ $\mathbb{P}$-a.s.\,\,for all $t\geq 0$.
\end{Corollary}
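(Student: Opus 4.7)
The lower bound $-\infty < R_t$ is exactly the content of Theorem \ref{Prop_5}(iii), so the real work is to prove the upper bound $R_t \leq 0$. My plan is to push the sign information across the limit by reusing the decomposition already exploited in parts (i) and (ii). From \eqref{OIS_Rate_1},
\begin{equation*}
R(t,T_n) \;=\; \frac{1-P(t,T_n)}{S_n(t)} \;=\; \frac{1}{S_n(t)}-\frac{P(t,T_n)}{S_n(t)}.
\end{equation*}
Because $P(t,T_n)>0$ and $S_n(t)>0$ $\mathbb{P}$-a.s., the subtracted term is non-negative, which gives the pointwise estimate $R(t,T_n)\leq 1/S_n(t)$ $\mathbb{P}$-a.s.\ for every $n\geq 1$. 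This is a cleaner one-sided substitute for the equality used in Theorem \ref{Prop_5}(ii), and it does not require $P$ to exist finitely.

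Under the hypothesis $S_n \to +\infty$ in ucp, Theorem \ref{UCP_Thm_1} (the same continuity of $x\mapsto 1/x$ invoked in the proof of Theorem \ref{Prop_5}(ii)) yields $1/S_n \to 0$ in ucp. Combining this with the ucp convergence of $R(\cdot,T_n)$ to $R_\cdot$ and using that ucp convergence implies convergence in probability at each fixed time, I would pass to the limit in the inequality $R(t,T_n)\leq 1/S_n(t)$: since weak inequalities are preserved under convergence in probability, this produces $R_t \leq 0$ $\mathbb{P}$-a.s.\ for every $t\geq 0$. Together with Theorem \ref{Prop_5}(iii), the two-sided bound $-\infty < R_t \leq 0$ follows.

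I do not foresee any serious obstacle here, since the argument is essentially Theorem \ref{Prop_5}(ii) with the finiteness assumption on the long-term bond $P$ dropped: without that assumption one loses the identification of $P(t,T_n)/S_n(t)$ with $0$ in the limit, but the sign alone is enough to recover an inequality in place of an equality. The only step requiring some care is the passage from the pre-limit inequality to the limiting inequality, which is routine once one records that ucp convergence implies convergence in probability pointwise in $t$.
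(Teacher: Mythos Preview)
Your argument is correct and is essentially the paper's one-line proof unpacked: the paper just writes ``This is a consequence of Theorem \ref{Prop_5} (ii) and (iii)'', and what you have done is extract from the proof of (ii) the fact that $1/S_n\to 0$ in ucp under the hypothesis $S_n\to+\infty$, combine it with the sign of $P(\cdot,T_n)/S_n(\cdot)$ (which is the content of (iii)), and pass to the limit in the resulting inequality. One small remark: strictly speaking, $1/S_n\to 0$ in ucp when $S_n\to+\infty$ follows directly from Definition \ref{UCP_Def_2} rather than from Theorem \ref{UCP_Thm_1}, since the latter is stated for finite limits; but the conclusion is of course correct, and the paper itself makes the same citation in the proof of Theorem \ref{Prop_5}(ii).
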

\begin{proof} 
This is a consequence of Theorem \ref{Prop_5} (ii) and (iii).
\end{proof}

\begin{prop}\label{Prop_Short_Rate}
 Suppose $S_{n} \overset{n \rightarrow \infty}{\longrightarrow} +\infty$ in ucp. If for all $n \in \mathbb{N}$ $P\!\left(t,T_{n}\right) \geq P\!\left(t,T_{n+1}\right)$ for all $t \in \left[0,T_{n}\right]$, then
 \begin{equation*}
  R_{t} = - k_{t}
  \end{equation*}
for a process $\left(k_{t}\right)_{t\geq 0}$ with $0\leq k_{t} \leq 1$ $\mathbb{P}$-a.s.\,\,for all $t\geq 0$.
\end{prop}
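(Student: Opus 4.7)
The plan is to establish the two-sided bound $-1 \le R_{t} \le 0$ $\mathbb{P}$-a.s.\,for all $t \ge 0$, from which the conclusion follows by setting $k_{t} := -R_{t}$. Corollary \ref{Cor_7} already supplies $-\infty < R_{t} \le 0$; the new content is therefore the lower bound $R_{t} \ge -1$, and this is where the monotonicity hypothesis on the bonds will intervene.

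To obtain the lower bound, I would split the finite-tenor swap rate as
\begin{equation*}
R\!\left(t, T_{n}\right) = \frac{1 - P\!\left(t, T_{n}\right)}{S_{n}\!\left(t\right)} = \frac{1}{S_{n}\!\left(t\right)} - \frac{P\!\left(t, T_{n}\right)}{S_{n}\!\left(t\right)},
\end{equation*}
and note that $S_{n} \to +\infty$ in ucp entails $\frac{1}{S_{n}} \to 0$ in ucp by Theorem \ref{UCP_Thm_1}. Since Corollary \ref{Cor_7} already guarantees the existence of the ucp limit $R_{\,\cdot\,}$, this forces $\frac{P\!\left(\,\cdot\,, T_{n}\right)}{S_{n}\!\left(\,\cdot\,\right)} \to -R_{\,\cdot\,}$ in ucp. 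At this point I would invoke the pathwise estimate already used in the proof of Theorem \ref{Prop_5} (iii),
\begin{equation*}
0 \le \frac{P\!\left(s, T_{n}\right)}{S_{n}\!\left(s\right)} = 1 - \frac{S_{n-1}\!\left(s\right)}{S_{n}\!\left(s\right)} \le 1 \quad \mathbb{P}\text{-a.s.},
\end{equation*}
which remains valid under the monotonicity $P\!\left(t, T_{n}\right) \ge P\!\left(t, T_{n+1}\right)$ on $t \in \left[0, T_{n}\right]$. Passing to an a.s.-convergent (uniformly on compacts) subsequence, which ucp convergence always provides, transfers this inequality to $-R_{\,\cdot\,}$ and yields $0 \le -R_{t} \le 1$ $\mathbb{P}$-a.s.\,for all $t \ge 0$. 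Setting $k_{t} := -R_{t}$ then produces the required process, whose c\`adl\`ag property is inherited from that of $R$ (cf.\,Remark \ref{Remark_1}).

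The only delicate step is the transfer of the pointwise-a.s.\,bound $\frac{P\!\left(\,\cdot\,, T_{n}\right)}{S_{n}\!\left(\,\cdot\,\right)} \in \left[0, 1\right]$ through the ucp limit, since ucp convergence is not itself pointwise. This is the standard interchange, handled by extracting a subsequence along which convergence is uniform on compacts almost surely and then invoking right-continuity to conclude at each fixed $t$; no new a priori estimate is needed beyond the one already exploited in the proof of Theorem \ref{Prop_5}.
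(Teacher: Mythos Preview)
Your argument never actually uses the monotonicity hypothesis $P(t,T_n)\ge P(t,T_{n+1})$ in a substantive way. The pathwise bound $0\le \frac{P(s,T_n)}{S_n(s)}\le 1$ that you lift from the proof of Theorem~\ref{Prop_5}(iii) holds unconditionally; saying it ``remains valid under the monotonicity'' is true but vacuous. If Corollary~\ref{Cor_7} really delivered the existence of the ucp limit $R$ from the bare assumption $S_n\to+\infty$, your argument would yield $R_t\in[-1,0]$ \emph{without} the extra hypothesis of the proposition --- which should make you suspicious that you have misread what Corollary~\ref{Cor_7} supplies.

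The point is that Theorem~\ref{Prop_5}(iii), and hence Corollary~\ref{Cor_7}, shows only that the sequence $R(\cdot,T_n)$ is bounded (it cannot diverge to $\pm\infty$); it does \emph{not} establish convergence when $S_n\to+\infty$ and the long-term bond is not known to exist. A bounded sequence need not converge, and one can manufacture bond families with $S_n\to+\infty$ for which $\frac{P(t,T_n)}{S_n(t)}$ oscillates. The paper's proof uses the monotonicity precisely to close this gap: from $P(t,T_n)\ge P(t,T_{n+1})$ and $S_n(t)\le S_{n+1}(t)$ one gets
\[
\frac{P(t,T_n)}{S_n(t)}\;\ge\;\frac{P(t,T_{n+1})}{S_{n+1}(t)},
\]
so the sequence $\bigl(\frac{P(t,T_n)}{S_n(t)}\bigr)_{n}$ is nonnegative and decreasing, hence converges to some $k_t\in[0,1]$; combined with $\frac{1}{S_n}\to 0$ this yields $R_t=-k_t$. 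In short, the monotonicity hypothesis is there to furnish \emph{existence} of the limit $k$, not to obtain the bound --- exactly the opposite of where you placed it.
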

\begin{proof}
Since for all $n \in \mathbb{N}$, $S_{n}\!\left(t\right) \leq S_{n+1}\!\left(t\right)$ $\mathbb{P}$-a.s.\,\,for all $t\geq 0$, 
for all $n \in \mathbb{N}$ we have $\mathbb{P}$-a.s.
\begin{equation*}
 \frac{P\!\left(t,T_{n}\right)}{S_{n}\!\left(t\right)} 
 = \frac{P\left(t,T_{n}\right)}{\beta_{t}} \frac{\beta_{t}}{S_{n}\!\left(t\right)}
 \geq \frac{P\left(t,T_{n+1}\right)}{\beta_{t}} \frac{\beta_{t}}{S_{n + 1}\!\left(t\right)}
 = \frac{P\!\left(t,T_{n+1}\right)}{S_{n+1}\!\left(t\right)}
\end{equation*}
for all $t\geq 0$. This implies that $\mathbb{P}$-a.s.
\begin{equation*}
 1 \geq \sup_{0\leq s\leq t} \frac{P\!\left(s,T_{n}\right)}{S_{n}\!\left(s\right)} \geq \sup_{0\leq s\leq t} \frac{P\!\left(s,T_{n+1}\right)}{S_{n+1}\!\left(s\right)}
\end{equation*}
for all $t\geq 0$. Hence $\frac{P\left(\,\cdot\,,T_{n}\right)}{S_{n}\left(\cdot\right)} \overset{n \rightarrow \infty}{\longrightarrow} k_{\cdot}$ in ucp, with $0\leq k_{t} \leq 1$ $\mathbb{P}$-a.s.\,\,for all $t\geq 0$. 
In particular  by Theorem \ref{Prop_5} (ii), we get $k_{t} = 0$ $\mathbb{P}$-a.s.\,\,for all $t\geq 0$ if $P_{t} < +\infty$ $\mathbb{P}$-a.s.\,\,for all $t\geq 0$.
\end{proof}

\begin{Remark}
\begin{enumerate}
\item Note that if $r_{t} \geq 0$ for all $t\geq 0$, then bond prices are decreasing with respect to time to maturity.
\item We remark that the process $k$ in Proposition \ref{Prop_Short_Rate} may not necessarily be identically zero if $S_{n} \overset{n \rightarrow \infty}{\longrightarrow} +\infty$ in ucp. In particular consider the (unrealistic) case when $P(t,T_n):=1 + (T-t)^n$ for some $T>0$.
Then 
$$\frac{P\left(t,T_{n}\right)}{S_{n}(t)}= \frac{1 + (T-t)^n}{\sum_{i=1}^n (1 + (T-t)^i)}=  \frac{1}{1 +\sum_{i=1}^{n-1} \frac{1 + (T-t)^i}{1 + (T-t)^n}}\overset{n \rightarrow \infty}{\longrightarrow} 1.$$ 
\end{enumerate}
\end{Remark}

If we assume that there exists a liquid market for perpetual OIS, 
meaning OIS with infinitely many exchanges with the fixed rate corresponding to the long-term swap rate, 
we can state the following theorem. We recall that we are working under the hypothesis that $\mathbb{P}$ 
is an equivalent martingale measure for the bond market, i.e\,\,that the bond market is arbitrage-free in 
the sense of \textit{no asymptotic free lunch with vanishing risk}, see \cite{Cuchiero2016}. 

\begin{Theorem}\label{Theorem_Non_Monotonic}
 In the setting outlined in Section \ref{Rates}, the long-term swap rate is either constant or non-monotonic.
\end{Theorem}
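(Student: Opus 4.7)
The plan is to argue by contradiction, exploiting the no-arbitrage hypothesis for the extended market consisting of bonds and perpetual OIS contracts, whose fair fixed rate is $R_{t}$. By Theorem~\ref{Prop_5} either (i) $S_{\infty}$ is finite $\mathbb{P}$-a.s.\ and $R_{t}=1/S_{\infty}(t)>0$, or (ii) $S_{n}\to+\infty$ in ucp, in which case $R_{t}\in[-1,0]$ by Corollary~\ref{Cor_7}. In case (i), I would first observe that $M_{t}:=S_{\infty}(t)/\beta_{t}$ is a nonnegative $(\mathbb{F},\mathbb{P})$-supermartingale (in fact a martingale under mild integrability), inherited as the ucp limit of the martingales $S_{n}(\cdot)/\beta_{\cdot}=\sum_{i=1}^{n}\delta_{i}P(\cdot,T_{i})/\beta_{\cdot}$, so that $R_{t}\beta_{t}=1/M_{t}$.

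Suppose for contradiction that $R$ is $\mathbb{P}$-a.s.\ monotonic and not constant, and (without loss of generality) take $R$ to be non-decreasing. Then there exist $s<t$ and an event $A\in\mathcal{F}_{t}$ with $\mathbb{P}(A)>0$ on which $R_{s}<R_{t}$. Consider the following strategy: at time $s$ enter a payer perpetual OIS, of zero initial value by definition of $R_{s}$; at time $t$ enter a receiver perpetual OIS, of zero initial value by definition of $R_{t}$. From time $t$ onward the two floating legs cancel pathwise, leaving a net cash flow of $(R_{t}-R_{s})\delta_{i}$ at each tenor date $T_{i}>t$; this stream is non-negative $\mathbb{P}$-a.s.\ and strictly positive on $A$, with present value at time $t$ equal to $(R_{t}-R_{s})S_{\infty}(t)$.

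To derive the contradiction, I would compute the conditional expectation given $\mathcal{F}_{s}$ of the time-$t$ discounted self-financing wealth of the combined strategy, using the martingale property of $S_{\infty}/\beta$ and of each $P(\cdot,T_{i})/\beta$ together with the OIS identity \eqref{OIS_Rate}. The computation is designed to show that the discounted wealth accumulated from a zero-cost initial position is non-negative almost surely and has strictly positive expectation on $A$, contradicting the $\mathbb{P}$-martingale property of self-financing discounted wealth in the arbitrage-free extended market. The non-increasing case follows by interchanging the payer and receiver legs, and case (ii) is handled analogously, using the boundedness $|R_{t}|\le 1$ in place of the explicit formula $R_{t}=1/S_{\infty}(t)$.

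The main obstacle is making precise the role of the stochastic cash flows on $(s,t]$ from the payer leg: one must verify that their discounted expectation does not cancel the strictly positive contribution of the deterministic future cash flow $(R_{t}-R_{s})S_{\infty}(t)$ on $A$. This reduces to a direct expectation computation that relies only on the martingale property of discounted bond prices and on the defining identity $R_{t}\,S_{\infty}(t)=1-P_{t}$ for the long-term swap rate.
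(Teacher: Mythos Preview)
Your strategy of entering offsetting perpetual OISs is the same as the paper's, but you miss the one idea that makes the argument work cleanly: the paper chooses the tenor structure so that the \emph{first} exchange date $T_{1}$ lies at or after the later of the two entry times. Concretely, for $t<s$ the paper takes $t<s\le T_{1}<T_{2}<\cdots$, enters the payer perpetual OIS at time $t$ with fixed rate $R_{t}$, and the receiver at time $s$ with fixed rate $R_{s}$. Because no exchange occurs on $(t,s]$, there are \emph{no intermediate cash flows}; from $T_{1}$ onward the floating legs cancel pathwise and the net payoff at each $T_{i}$ is $\delta_{i}N(R_{s}-R_{t})$, which is $\ge 0$ $\mathbb{P}$-a.s.\ and strictly positive with positive probability. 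This is a textbook arbitrage, and nothing further needs to be computed.

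Your version places exchange dates between the two entry times, producing stochastic cash flows $(\bar L(T_{i-1},T_{i})-R_{s})\delta_{i}$ of indeterminate sign on $(s,t]$. You correctly flag this as the ``main obstacle'', but the proposed resolution does not close the gap: an ``expectation computation'' cannot deliver the pathwise non-negativity required for an arbitrage, and the martingale property of discounted self-financing wealth merely says the expected discounted gain is zero, which is perfectly compatible with both positive and negative cash flows along the way. In other words, you cannot conclude that the discounted wealth is ``non-negative almost surely'' from an expectation identity. The case split into $S_{\infty}<\infty$ versus $S_{n}\to+\infty$, the supermartingale structure of $S_{\infty}/\beta$, and the identity $R_{t}S_{\infty}(t)=1-P_{t}$ are all unnecessary once the tenor is chosen as above; the paper's argument is entirely model-free and uses only the definition of the fair rate.
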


\begin{proof}
 First, we assume that $R_{s} \geq R_{t}$ $\mathbb{P}$-a.s.\,\,with $\mathbb{P}\!\left(R_{s} > R_{t}\right) > 0$ for $0 \leq t < s$. 
 Then, let us consider the following investment strategy. 
 At time $t$ we enter a payer OIS with perpetual annuity, nominal value $N$, fixed-rate $R_{t}$ and the following tenor structure
 \begin{equation}\label{Theorem_Equ_0}
  t < s \leq T_{1} < \cdots < T_{n}\,
 \end{equation}
 where $n \rightarrow \infty$.
 This investment has zero value in $t$, so there is no net investment so far. 
 We receive the following payoff in each $T_{i}$, $i \in  \mathbb{N}\! \setminus\! \left\{0\right\}$:
 \begin{equation*}
  \left(\bar{L}\!\left(T_{i-1},T_{i}\right) - R_{t}\right) \delta_i N\,.
 \end{equation*}
Then at time $s$ we enter a receiver OIS with a perpetual annuity, nominal value $N$, a fixed-rate of $R_{s}$ and the same tenor structure as in \eqref{Theorem_Equ_0}. 
The value of this OIS is zero in $s$, hence there is still no net investment, and the payoff in each $T_{i}$, $i \in \mathbb{N}$, resulting from this OIS is: 
\begin{equation*}
  \left(R_{s} - \bar{L}\!\left(T_{i-1},T_{i}\right)\right) \delta_i N\,.
 \end{equation*}
 This strategy leads to the payoff at $T_i$
 \begin{equation*}
  H_{i} \colonequals \left(\bar{L}\!\left(T_{i-1},T_{i}\right) - R_{t}\right) \delta_i N + \left(R_{s} - \bar{L}\!\left(T_{i-1},T_{i}\right)\right) \delta_i N 
  = \delta_i N \left(R_{s} - R_{t}\right) \geq 0
 \end{equation*}
 with $\mathbb{P}\!\left(H_{i} > 0\right) > 0$, i.e.\,\,to an arbitrage.
 \par
 If we assume that $R_{s} \leq R_{t}$ $\mathbb{P}$-a.s.\,\,with $\mathbb{P}\!\left(R_{s} < R_{t}\right) > 0$ for $0 \leq t \leq s \leq T_{1}$, 
 we use an analogue arbitrage strategy with the only difference that we invest in $t$ in a receiver OIS and in $s$ in a payer OIS.
 \par
 It follows that in an arbitrage-free market setting the long-term swap rate cannot be non-decreasing or non-increasing, 
 i.e.\,\,it can only be monotonic if it is constant.
\end{proof}


\section{Relation between Long-Term Rates}\label{Relation_between_Long_Term_Rates}

We now study the relation among the long-term rates introduced in Sections \ref{Long_Term_Rates_Section} and \ref{Long_Swap_Rate_Section} in terms of their existence. 
For further details, we also refer to \citet{Thesis}. 
\par
For the sake of simplicity we now assume a tenor structure with $T_{i} - T_{i-1} = \delta$ for all $i \in \mathbb{N}\! \setminus\! \left\{0\right\}$.
This is of course the case when we extrapolate the long-term swap rate by OIS contracts existing on the market.
We choose this setting in order to focus on the behaviour of the bond price for $T \rightarrow \infty$, i.e.\,\,of $P$ defined in \eqref{Long_Bond},
independently of the maturity distances in the tenor structure.

\subsection{Influence of the Long-Term Yield on Long-Term Rates}\label{Influence_Yield_Swap_Simple}

In this section we study the influence of the existence of the long-term yield on the existence of the long-term swap and simple rates. 
Since typical market data indicate positive long-term yields\footnote{For long-term interest rate market data  
please refer to \citet{ECB2015} for the EUR market and to \citet{FED2015} for the USD market.}, we restrict ourselves to the cases of $\ell \geq 0$. 
For a more general analysis which also takes into account the possibility of a negative long-term yield, we refer to \citet{Thesis}.


\begin{Theorem}\label{Prop_8}
 If $0 < \ell_{t} < +\infty$ $\mathbb{P}$-a.s.\,\,for all $t\geq 0$, then $0 < R_{t} < +\infty$ $\mathbb{P}$-a.s.\,\,for all $t\geq 0$ and $L = + \infty$.
\end{Theorem}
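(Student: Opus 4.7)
The strategy is to exploit the fact that $\ell_t \in (0,+\infty)$ forces the zero-coupon bond prices $P(t,T)$ to decay exponentially to zero as $T \to \infty$, uniformly on compact time intervals. Once this uniform exponential decay is available, the simple rate $L(t,T)$, which is essentially $1/P(t,T)$ divided by $T-t$, must diverge to $+\infty$, while the series $\sum_i \delta P(t,T_i)$ must converge to a finite, strictly positive limit. By Theorem \ref{Prop_5}(i) this gives $R_t = 1/S_{\infty}(t) \in (0,+\infty)$.

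To obtain the uniform exponential decay, I would first invoke the DIR theorem: since $\ell$ is non-decreasing and the hypothesis yields $\ell_0 > 0$ $\mathbb{P}$-a.s., we have $\inf_{s \leq t} \ell_s \geq \ell_0 > 0$. Fix $t > 0$ and $c > 0$ and consider the event $A_c \colonequals \{\ell_0 > 2c\}$, so that $\mathbb{P}(A_c) \uparrow 1$ as $c \downarrow 0$. The ucp convergence $Y(\cdot,T) \to \ell$, which is built into the definition of $\ell$, provides on the further event $B_{T,c} \colonequals \{\sup_{s \leq t} |Y(s,T) - \ell_s| < c\}$ (whose probability tends to $1$ as $T \to \infty$) the bound $Y(s,T) > c$ uniformly in $s \in [0,t]$, and therefore $P(s,T) \leq e^{-c(T-s)}$ on $A_c \cap B_{T,c}$.

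For $L = +\infty$, I would insert this bound into
\[ L(s,T) \;=\; \frac{1}{T-s}\!\left(\frac{1}{P(s,T)} - 1\right) \;\geq\; \frac{e^{c(T-s)} - 1}{T-s}, \]
whose right-hand side diverges to $+\infty$ uniformly in $s \in [0,t]$ as $T \to \infty$. Consequently, for every $M > 0$, $\liminf_{T \to \infty} \mathbb{P}(\inf_{s \leq t} L(s,T) > M) \geq \mathbb{P}(A_c)$, and letting $c \downarrow 0$ yields $L = +\infty$ in the ucp sense. For $R$, I would apply the same bound termwise: since $T_i = T_0 + i\delta$ under the constant-tenor convention of Section \ref{Relation_between_Long_Term_Rates}, we get $P(s, T_i) \leq e^{ct} e^{-cT_i}$ for large $T_i$, so the tail $S_{\infty}(s) - S_n(s) = \sum_{i>n} \delta P(s,T_i)$ is dominated by a convergent geometric series uniformly in $s \in [0,t]$. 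This gives $S_n \to S_{\infty}$ in ucp with $S_{\infty} < +\infty$ $\mathbb{P}$-a.s., so Theorem \ref{Prop_5}(i) yields $R_t = 1/S_{\infty}(t) > 0$, while $R_t < +\infty$ follows from $S_{\infty}(t) \geq \delta P(t,T_1) > 0$.

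The main obstacle is the need to upgrade the pointwise-in-$t$ hypothesis $\ell_t > 0$ to a strictly positive lower bound that is uniform in $s$ on $[0,t]$; this is precisely what the ucp convergence of the derived quantities requires. The resolution is the combination of DIR monotonicity, which contracts the uniform-in-$s$ lower bound on $\ell$ to the single random variable $\ell_0$, with the ucp (not merely pointwise) convergence of $Y(\cdot,T)$ to $\ell$ already built into the definition of the long-term yield.
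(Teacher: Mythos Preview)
Your proposal is correct and follows essentially the same route as the paper: use the DIR theorem to reduce the uniform lower bound on $\ell_{s}$ over $[0,t]$ to the single variable $\ell_{0}>0$, combine this with the ucp convergence $Y(\cdot,T)\to\ell$ to obtain a uniform exponential upper bound on $P(s,T)$, and then use a geometric-series (ratio-test) argument to conclude $S_{n}\to S_{\infty}$ in ucp, whence Theorem~\ref{Prop_5}(i) gives $0<R_{t}<+\infty$. The only noteworthy difference is that for $L=+\infty$ the paper simply invokes Remark~3 of \cite{Brody_Hughston}, whereas you supply the short direct computation $L(s,T)\geq (e^{c(T-s)}-1)/(T-s)\to+\infty$; both are valid and yield the same conclusion.
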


\begin{proof}
 First, we show that $S_{n} \overset{n \rightarrow \infty}{\longrightarrow} S_{\infty}$ in ucp. 
 For this, it is sufficient to show that for all $t\geq 0$, $\lim_{n \rightarrow \infty} \sup_{0\leq s\leq t} S_{n}\!\left(s\right) < +\infty$ 
$\mathbb{P}$-a.s., since this also implies $\lim_{n \rightarrow \infty} \sup_{0\leq s\leq t} S_{n}\!\left(s\right) < +\infty$ in probability.
\par
We know that for all $t\geq 0$ and all $\epsilon > 0$ it holds 
\small
\begin{align*}
 \mathbb{P}\!\left(\sup_{0\leq s\leq t} \left\vert Y\!\left(s,T_{n}\right) - \ell_{s} \right\vert \leq \epsilon\right) 
 & \overset{\eqref{Yield_def_1}}{=} \mathbb{P}\!\left(\sup_{0\leq s\leq t} \left\vert \frac{\log\!P\!\left(s,T_{n}\right)}{T_{n}-s} + \ell_{s} \right\vert \leq \epsilon \right)
  \overset{n \rightarrow \infty}{\longrightarrow} 1,
\end{align*}
\normalsize
i.e.\,\,for all $t\geq 0$ and all $\epsilon > 0$ there exists $N_{\epsilon}^{t} \in \mathbb{N}$ such that for all $n \geq N_{\epsilon}^{t}$ 
\begin{equation}\label{inequ:Prop_8_1}
  \mathbb{P}\!\left(\sup_{0\leq s\leq t} \left\vert \frac{\log\!P\!\left(s,T_{n}\right)}{T_{n}-s} + \ell_{s} \right\vert \leq \epsilon \right) > 1 - \delta\!\left(\epsilon\right)
\end{equation}
with $\delta\!\left(\epsilon\right) \rightarrow 0$ for $\epsilon \rightarrow 0$. 
Define for $\epsilon > 0$, $u\geq 0$ and $n \in \mathbb{N}$
\begin{equation}\label{def:A_2}
 A^{\epsilon, u, n}_{1} \colonequals  \left\{\omega \in \Omega: \sup_{0\leq s\leq u} \left\vert \frac{\log\!P\!\left(s,T_{n}\right)}{T_{n}-s} +  \ell_{s}\right\vert \leq \epsilon \right\}.
\end{equation}
Then for $n \geq N_{\epsilon}^{u}$ with $u > t$ we have $\mathbb{P}\!\left(A^{\epsilon, u, n}_{1}\right) > 1 - \delta\!\left(\epsilon\right)$ by \eqref{inequ:Prop_8_1} and
\begin{equation*}
 A^{\epsilon, u, n}_{1} \subseteq \left\{\omega \in \Omega: \left\vert \log\!P\!\left(t,T_{n}\right) +  \left(T_{n} - t\right) \ell_{t}\right\vert \leq \epsilon \left(T_{n} - t\right) \right\}.
\end{equation*} 
Consequently for $n \geq N_{\epsilon}^{u}$ on $A^{\epsilon, u, n}_{1}$ we have
\begin{equation}\label{Prop_8_Inequ}
  \exp\!\left[- \left(\epsilon + \ell_{t}\right) \left(T_{n}\!- t\right)\right] \leq P\!\left(t,T_{n}\right) \leq \exp\!\left[ \left(\epsilon - \ell_{t}\right) \left(T_{n}\!- t\right)\right]
\end{equation}
for all $t\in \left[0,u\right]$ and since $\ell_{0} \leq \ell_{t} \leq \ell_{u}$ for all $t\in \left[0,u\right]$ by the DIR theorem (see for example \citet{article_Hubalek}), we have that 
for $n \geq N_{\epsilon}^{u}$ on $A^{\epsilon, u, n}_{1}$ it holds
\begin{equation}\label{Prop_8_Inequality}
  \exp\!\left[ - \left(\epsilon + \ell_{u}\right) \left(T_{n}\!- u\right)\right] \leq \sup_{0\leq s\leq t} P\!\left(s,T_{n}\right) \leq \exp\!\left[ \left(\epsilon - \ell_{0}\right) T_{n}\right].
\end{equation}
For $t\geq 0$ we define 
 \begin{equation}\label{equ:B_1}
 B_{1}\!\left(t\right) \colonequals \left\{\omega \in \Omega: \lim_{n \rightarrow \infty} \sup_{0\leq s\leq t} S_{n}\!\left(s\right) < +\infty\right\}.
  \end{equation}
We then obtain for $t < u$ and $n \geq N_{\epsilon}^{u}$
\small
\begin{align*}
 \mathbb{P}\!\left(B_{1}\!\left(t\right)\right) 
 & \overset{\phantom{\eqref{Prop_8_Inequality}}}{=} \mathbb{P}\!\left(\left\{\sup_{0\leq s\leq t} S_{N_{\epsilon}^{u}-1}\left(s\right) < +\infty\right\} \cap \left\{\lim_{n \rightarrow +\infty}\sup_{0\leq s\leq t} \sum_{i=N_{\epsilon}^{u}}^{n}\!P\!\left(s,T_{i}\right)< +\infty\right\}\right)\nonumber\\
 & \overset{\phantom{\eqref{Prop_8_Inequality}}}{=} \mathbb{P}\!\left(\lim_{n \rightarrow \infty}\sup_{0\leq s\leq t} \sum_{i=N_{\epsilon}^{u}}^{n}\!P\!\left(s,T_{i}\right) < +\infty\right) \nonumber\\
 & \overset{\phantom{\eqref{Prop_8_Inequality}}}{=} \mathbb{P}\!\left(\left.\lim_{n \rightarrow \infty}\sup_{0\leq s\leq t} \sum_{i=N_{\epsilon}^{u}}^{n}\!P\!\left(s,T_{i}\right) < +\infty \right| A_{1}^{\epsilon, u, n}\right) \mathbb{P}\!\left(A_{1}^{\epsilon, u, n}\right) \nonumber\\
 & \phantom{===} + \mathbb{P}\!\left(\left.\lim_{n \rightarrow \infty}\sup_{0\leq s\leq t} \sum_{i=N_{\epsilon}^{u}}^{n}\!P\!\left(s,T_{i}\right) < +\infty \right| \Omega\backslash A_{1}^{\epsilon, u, n}\right) \mathbb{P}\!\left(\Omega\backslash A_{1}^{\epsilon, u, n}\right) \nonumber\\ 
 & \overset{\phantom{\eqref{Prop_8_Inequality}}}{\geq} \mathbb{P}\!\left(\left.\lim_{n \rightarrow \infty}\sup_{0\leq s\leq t} \sum_{i=N_{\epsilon}^{u}}^{n}\!P\!\left(s,T_{i}\right) < +\infty \right| A_{1}^{\epsilon, u, n}\right) \mathbb{P}\!\left(A_{1}^{\epsilon, u, n}\right) \nonumber\\
 & \overset{\phantom{\eqref{Prop_8_Inequality}}}{\geq} \mathbb{P}\!\left(\left.\lim_{n \rightarrow \infty} \sum_{i=N_{\epsilon}^{u}}^{n}\! \sup_{0\leq s\leq t} P\!\left(s,T_{i}\right) < +\infty \right| A_{1}^{\epsilon, u, n}\right) \mathbb{P}\!\left(A_{1}^{\epsilon, u, n}\right) \nonumber\\
 & \overset{\eqref{Prop_8_Inequality}}{\geq} \mathbb{P}\!\left(\left.\lim_{n \rightarrow \infty} \sum_{i=N_{\epsilon}^{u}}^{n}\! \exp\!\left[\left(\epsilon - \ell_{0}\right)T_{i}\right] < +\infty \right| A_{1}^{\epsilon, u, n}\right) \mathbb{P}\!\left(A_{1}^{\epsilon, u, n}\right) \nonumber\\
 & \overset{\phantom{\eqref{Prop_8_Inequality}}}{\geq} \left(1 - \delta\!\left(\epsilon\right)\right) 
 \rightarrow 1
\end{align*}
\normalsize
for $\epsilon \rightarrow 0$ since it holds $\mathbb{P}$-a.s.
\begin{equation*}
\lim_{n \rightarrow \infty} \frac{\exp\left(- \ell_{0} T_{n+1} \right)}{\exp\left(- \ell_{0} T_{n} \right)} = \exp\!\left(-\ell_{0} \delta\right) \in \left(0,1\right),
 \end{equation*}
 which implies by the ratio test that $\lim_{n \rightarrow \infty} \sum_{i=0}^{n} \exp\!\left[\left(\epsilon - \ell_{0}\right)T_{i}\right] < +\infty$ 
 \newline $\mathbb{P}$-a.s.\,\,for $\epsilon \rightarrow 0$. That means, it holds $S_{n} \overset{n \rightarrow \infty}{\longrightarrow} S_{\infty}$ in ucp.
 \par
  Hence by Theorem \ref{Prop_5} (i) and (iii) we get for all $t \geq 0$ that $0 < R_{t} < +\infty$ $\mathbb{P}$-a.s.\,\,with
 \begin{equation*}
   R_{t} = \frac{1}{S_{\infty}\!\left(t\right)}.
  \end{equation*}
 The exploding long-term simple rate, $L = + \infty$, is a result of Proposition 5.4 of \citet{Brody_Hughston}.
\end{proof}

Now, let us investigate what happens to the long-term rates if the long-term yield either vanishes or explodes. 
We see that besides the asymptotic behaviour of the yield, information about the long-term zero-coupon bond price is needed to 
state the consequences on the other long-term rates. 
For the analysis of the cases when $\ell$ is negative, we refer to \citet{Thesis}.

\begin{prop}\label{Prop_12}
 Let $\ell_{t} = 0$ $\mathbb{P}$-a.s.\,\,for all $t\geq 0$.
 If  $P$ exists finitely with $\inf\nolimits_{0\leq s\leq t} P_{s} > 0$ $\mathbb{P}$-a.s.\,\,for all $t\geq 0$, then $R_{t} = 0$ and $L_{t} = 0$ $\mathbb{P}$-a.s.\,\,for all $t\geq 0$.
\end{prop}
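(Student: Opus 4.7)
The plan is to handle the two conclusions separately: the claim $L_t = 0$ follows directly from the definition, while $R_t = 0$ will be obtained by verifying the hypotheses of Theorem~\ref{Prop_5}~(ii).

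For the long-term simple rate, I would start from
\begin{equation*}
L(s,T) = \frac{1}{T-s}\left(\frac{1}{P(s,T)} - 1\right),
\end{equation*}
and bound $\sup_{0\leq s\leq t}|L(s,T)|$ by $(T-t)^{-1}\sup_{0\leq s\leq t}|P(s,T)^{-1} - 1|$. The hypothesis $\inf_{0\leq s\leq t} P_s > 0$ $\mathbb{P}$-a.s., combined with the ucp convergence $P(\cdot,T)\to P_\cdot$, gives for large $T$ a uniform lower bound $P(s,T) \geq \tfrac{1}{2}\inf_{0\leq s\leq t} P_s > 0$ on a set of probability close to $1$; hence $|P(s,T)^{-1} - 1|$ is bounded on $[0,t]$ with high probability. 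Dividing by $(T-t)\to\infty$ yields $L(\cdot,T)\to 0$ in ucp.

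For the long-term swap rate, the key step is to show that $S_n \to +\infty$ in ucp, after which Theorem~\ref{Prop_5}~(ii) (together with the standing finiteness of $P$) delivers $R_t = 0$ immediately. To show $S_n\to +\infty$, I would fix $t \geq 0$ and $M > 0$, pick $\epsilon > 0$ and use the event
\begin{equation*}
A^{\epsilon,t,n} \colonequals \left\{\sup_{0\leq s\leq t}|P(s,T_n) - P_s| \leq \epsilon\right\},
\end{equation*}
whose probability tends to $1$ by ucp convergence of $P(\cdot,T_n)$ to $P_\cdot$. Letting $c \colonequals \tfrac{1}{2}\inf_{0\leq s\leq t}P_s$, which is strictly positive $\mathbb{P}$-a.s., for $\epsilon$ sufficiently small and $n$ large enough, on $A^{\epsilon,t,n}$ we get $P(s,T_n) \geq c$ for all $s\in[0,t]$, and hence $S_n(s) \geq \delta c\, (n - n_0)$ uniformly in $s\in[0,t]$. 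Since $\delta c (n-n_0)\to +\infty$, we obtain $\mathbb{P}(\inf_{0\leq s\leq t}S_n(s) > M)\to 1$, which is exactly $S_n\to +\infty$ in ucp in the sense of Definition~\ref{UCP_Def_2}.

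The only delicate point is passing from the a.s.\ positivity of $\inf_{0\leq s\leq t}P_s$ to a single deterministic constant $c$ usable uniformly in $n$. I would handle this by first conditioning on the event $\{\inf_{0\leq s\leq t}P_s \geq c_k\}$ for a sequence $c_k\downarrow 0$ whose union has probability $1$, and then combining with the high-probability event $A^{\epsilon,t,n}$ through a union-bound; the result follows since both events can be made of probability arbitrarily close to $1$ independently of $M$. Once $S_n\to +\infty$ in ucp is established, Theorem~\ref{Prop_5}~(ii) closes the argument.
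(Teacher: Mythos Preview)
Your argument for $L_t=0$ is essentially the paper's: bound $\sup_{0\le s\le t}L(s,T_n)$ above by $\frac{1}{(T_n-t)\inf_{0\le s\le t}P_s}$ and use that the infimum is strictly positive $\mathbb{P}$-a.s.

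For $R_t=0$, the paper takes a shorter route than you do: it simply invokes Corollary~\ref{Lemma_3} (if $\mathbb{P}(P_t>0)>0$ then $S_n\to+\infty$ in ucp), which is the contrapositive of Proposition~\ref{Lemma_4}. That result uses only the monotonicity of $S_n$ in $n$ together with $S_n-S_{n-1}=\delta_n P(\cdot,T_n)$, and requires no uniform lower bound on $P(\cdot,T_n)$ across different $n$. Your direct argument is also viable but, as written, contains a gap: from the single event $A^{\epsilon,t,n}$ you only obtain $P(s,T_n)\ge c$ for that one index $n$, yet you then assert $S_n(s)\ge\delta c\,(n-n_0)$, which needs $P(s,T_i)\ge c$ for every $i$ between $n_0$ and $n$. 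To repair this you must work on an intersection $\bigcap_{i=n_0}^{n}A^{\epsilon,t,i}$ (or on finitely many indices chosen so that the sum already exceeds $M$) and control the probability of that intersection; this can be done, but the bookkeeping is exactly what the paper's contrapositive argument avoids. Note also that neither your proof nor the paper's actually uses the hypothesis $\ell_t=0$; the conclusion follows from $\inf_{0\le s\le t}P_s>0$ alone.
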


\begin{proof}
From Corollary \ref{Lemma_3} follows that $S_{n} \overset{n \rightarrow \infty}{\longrightarrow} +\infty$ in ucp, hence 
by applying Theorem \ref{Prop_5} (ii) we get that $R_{t} = 0$ $\mathbb{P}$-a.s.\,\,for all $t\geq 0$.
\par
To show that the long-term simple rate vanishes $\mathbb{P}$-a.s., we prove that for all $t\geq 0$ it holds that 
$\mathbb{P}\!\left(B_{2}\!\left(t\right)\right) = 1$ with $B_{2}\!\left(t\right)$ defined for $t\geq 0$ as follows
\begin{equation*}
 B_{2}\!\left(t\right)\! \colonequals \!\left\{\omega \in \Omega: \lim_{n \rightarrow \infty} \sup_{0 \leq s\leq t} L\!\left(s,T_{n}\right) = 0\right\}.
\end{equation*} 
We have for all $t\geq 0$
\begin{align}\label{Prop_12_Equ_1}
 \mathbb{P}\!\left(B_{2}\!\left(t\right)\right) 
 & \overset{\eqref{Simple_def}}{=} \mathbb{P}\!\left(\lim_{n \rightarrow \infty}\sup_{0\leq s\leq t} \frac{1}{\left(T_{n}-s\right) P\!\left(s,T_{n}\right)} = 0\right) \nonumber\\
 & \overset{\phantom{\eqref{Simple_def}}}{\geq} \mathbb{P}\!\left(\lim_{n \rightarrow \infty} \frac{1}{\left(T_{n}-t\right)\inf\nolimits_{0\leq s\leq t} P_{s}} = 0\right)
= 1.\nonumber
\end{align}
\end{proof}

In the following, we investigate exploding long-term yields. 

\begin{Theorem}\label{Prop_14}
 If $\ell = +\infty$, then $0 < R_{t} < +\infty$ $\mathbb{P}$-a.s.\,\,for all $t\geq 0$ and $L = + \infty$.
\end{Theorem}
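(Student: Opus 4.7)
The plan is to exploit the ucp divergence $\ell = +\infty$ to obtain an exponential upper bound on bond prices $P(s,T_n)$ uniformly on compact time intervals, and then deduce both finite summability (yielding finite, positive $R$) and explosion of the simple rate.

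First I would translate $\ell = +\infty$ in the ucp sense into a workable estimate, mirroring the event-based technique of Theorem \ref{Prop_8}. For any $M > 0$, $t \geq 0$ and $u > t$, set
\begin{equation*}
  A^{M,u,n} \colonequals \left\{\omega \in \Omega : \inf_{0 \leq s \leq u} Y\!\left(s,T_{n}\right) \geq M \right\}\,,
\end{equation*}
and note that by the definition of ucp divergence to $+\infty$, for each fixed $M$ and $u$ there exists $N_{M}^{u}$ with $\mathbb{P}(A^{M,u,n}) > 1 - \delta(M)$ for all $n \geq N_{M}^{u}$, where $\delta(M) \to 0$ as $M \to \infty$. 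Using $Y(s,T_{n}) = -\log P(s,T_{n})/(T_{n}-s)$, on $A^{M,u,n}$ one has $P(s,T_{n}) \leq \exp[-M(T_{n}-s)]$ for all $s \in [0,u]$, and in particular
\begin{equation*}
  \sup_{0 \leq s \leq t} P\!\left(s,T_{n}\right) \leq \exp\!\left[-M\!\left(T_{n}-t\right)\right]\,.
\end{equation*}

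Next, I would show $S_{n} \to S_{\infty}$ in ucp, with $S_{\infty}$ finite. Because the tenor has $T_{i}-T_{i-1} \geq c > 0$, the sequence $T_{i} \to \infty$ at least linearly, so the geometric-type tail
$\sum_{i \geq N} \exp[-M(T_{i}-t)]$ converges $\mathbb{P}$-a.s.\,for any fixed $M > 0$ (by the ratio test, as in the proof of Theorem \ref{Prop_8}). Bounding $\sup_{0 \leq s \leq t} S_{n}(s)$ on $A^{M,u,n}$ by $S_{N_{M}^{u}-1}(t) + \sum_{i \geq N_{M}^{u}}^{n} \exp[-M(T_{i}-t)]$ and taking $\epsilon, M$ appropriately, I obtain $\mathbb{P}(\lim_{n} \sup_{0 \leq s \leq t} S_{n}(s) < +\infty) \geq 1 - \delta(M) \to 1$. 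Hence $S_{n} \to S_{\infty}$ in ucp, and Theorem \ref{Prop_5} (i) together with (iii) yields $0 < R_{t} = 1/S_{\infty}(t) < +\infty$ $\mathbb{P}$-a.s.

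Finally, for $L = +\infty$, I would argue directly from the same bound. On $A^{M,t,n}$, for every $s \in [0,t]$,
\begin{equation*}
  L\!\left(s,T_{n}\right) = \frac{1}{T_{n}-s}\!\left(\frac{1}{P\!\left(s,T_{n}\right)} - 1\right) \geq \frac{\exp[M(T_{n}-s)] - 1}{T_{n}-s}\,.
\end{equation*}
Since $x \mapsto (e^{Mx}-1)/x$ is increasing on $(0,\infty)$ for $M > 0$, the infimum over $s \in [0,t]$ is attained at $s = t$, giving $\inf_{0 \leq s \leq t} L(s,T_{n}) \geq (\exp[M(T_{n}-t)] - 1)/(T_{n}-t)$, which diverges deterministically as $n \to \infty$. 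Therefore, for any $K > 0$, one can fix $M$ large (to make $\delta(M)$ small) and then $n$ large so that $\mathbb{P}(\inf_{0 \leq s \leq t} L(s,T_{n}) > K) \geq \mathbb{P}(A^{M,t,n}) \to 1$, which is precisely ucp convergence of $L(\cdot,T_{n})$ to $+\infty$.

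The only mild obstacle is the interplay between the two limiting parameters $M$ (strength of divergence of $Y$) and $n$ (index in the tenor) in the conditioning argument; this is handled exactly as in Theorem \ref{Prop_8}, by first choosing $M$ large to control $\delta(M)$ and then letting $n \to \infty$ to close the probability estimate. No new ingredient beyond the ucp machinery already developed in Section \ref{Long_Term_Rates_Section} and Theorem \ref{Prop_5} is needed.
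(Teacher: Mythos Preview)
Your argument is correct and, for the core step of proving $S_{n}\to S_{\infty}$ in ucp, it is essentially the paper's proof: both condition on an event of the type $\{\inf_{0\le s\le u} Y(s,T_{n})\ge M\}$ (the paper writes this as $\{\inf_{0\le s\le u}|\log P(s,T_{n})|>\epsilon T_{n}\}$), extract an exponential upper bound on $\sup_{0\le s\le t}P(s,T_{n})$, and conclude by the ratio test exactly as in Theorem~\ref{Prop_8}.

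The one genuine difference is your treatment of $L=+\infty$. The paper does not argue this directly but simply invokes Remark~3 of \cite{Brody_Hughston}, which states that a positive long-term yield forces the long-term simple rate to explode. You instead give a self-contained proof, reusing the same event $A^{M,t,n}$ to obtain the pointwise lower bound $L(s,T_{n})\ge (\exp[M(T_{n}-s)]-1)/(T_{n}-s)$ and then pushing $n\to\infty$. This is a modest but real improvement: it keeps the proof internal to the paper's framework and avoids an external citation. Your monotonicity claim for $x\mapsto (e^{Mx}-1)/x$ is correct (its numerator $e^{Mx}(Mx-1)+1$ vanishes at $0$ and has positive derivative), so the infimum of the lower bound over $s\in[0,t]$ is indeed taken at $s=t$, and the divergence follows.
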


\begin{proof}
First, we show that $S_{n} \overset{n \rightarrow \infty}{\longrightarrow} S_{\infty}$ in ucp. 
We know by \eqref{UCP_Def_2_1} that for all $t\geq 0$ and all $\epsilon > 0$ it holds 
\begin{align*}
 \mathbb{P}\!\left(\inf_{0\leq s\leq t} \left\vert Y\!\left(s,T_{n}\right) \right\vert > \epsilon\right) 
 & \overset{\eqref{Yield_def_1}}{\geq} \mathbb{P}\!\left(\inf_{0\leq s\leq t} \left\vert \log\!P\!\left(s,T_{n}\right) \right\vert > \epsilon\, T_{n}\right)
  \overset{n \rightarrow \infty}{\longrightarrow} 1,
\end{align*}
i.e.\,\,for all $t\geq 0$ and all $\epsilon > 0$ there exists a $N^{t}_{\epsilon} \in \mathbb{N}$ such that for all $n\geq N^{t}_{\epsilon}$
\begin{equation}\label{equ:Prop_14_1}
 \mathbb{P}\!\left(\inf_{0\leq s\leq t} \left\vert \log\!P\!\left(s,T_{n}\right) \right\vert > \epsilon\, T_{n}\right) > 1 - \delta\!\left(\epsilon\right)
\end{equation}
with $\delta\!\left(\epsilon\right) \rightarrow 0$ for $\epsilon \rightarrow +\infty$. 
Define for $\epsilon > 0$, $u\geq 0$ and $n \in \mathbb{N}$
\begin{equation}\label{def:A_3}
 A_{2}^{\epsilon, u, n} \colonequals \left\{\omega \in \Omega: \inf_{0\leq s\leq u} \left\vert \log\!P\!\left(s,T_{n}\right) \right\vert > \epsilon\, T_{n} \right\}.
\end{equation}
Then for $n \geq N_{\epsilon}^{u}$, $t < u$
and $B_{1}\!\left(t\right)$ defined as in \eqref{equ:B_1}, we obtain
\begin{align*}
 \mathbb{P}\!\left(B_{1}\!\left(t\right)\right) 
 & \overset{\phantom{\eqref{def:A_3}}}{=} \mathbb{P}\!\left(\lim_{n \rightarrow \infty}\sup_{0\leq s\leq t} \sum_{i=N_{\epsilon}^{u}}^{n}\!P\!\left(s,T_{i}\right) < +\infty\right) \nonumber\\
 & \overset{\phantom{\eqref{def:A_3}}}{\geq} \mathbb{P}\!\left(\left.\lim_{n \rightarrow \infty} \sum_{i=N_{\epsilon}^{u}}^{n}\! \sup_{0\leq s\leq t} P\!\left(s,T_{i}\right) < +\infty \right| A_{2}^{\epsilon, u, n}\right) \mathbb{P}\!\left(A_{2}^{\epsilon, u, n}\right) \nonumber\\
 & \overset{\eqref{def:A_3}}{\geq} \mathbb{P}\!\left(\left.\lim_{n \rightarrow \infty} \sum_{i=N_{\epsilon}^{u}}^{n}\! \exp\!\left(-\epsilon\, T_{n} \right) < +\infty \right| A_{2}^{\epsilon, u, n}\right) \mathbb{P}\!\left(A_{2}^{\epsilon, u, n}\right) \nonumber\\
 & \overset{\phantom{\eqref{def:A_3}}}{\geq} \left(1 - \delta\!\left(\epsilon\right)\right)
  \rightarrow 1
\end{align*}
for $\epsilon \rightarrow +\infty$ due to the ratio test. 
That means $S_{n} \overset{n \rightarrow \infty}{\longrightarrow} S_{\infty}$ in ucp and consequently $0 < R_{t} < +\infty$ $\mathbb{P}$-a.s.\,\,for all $t\geq 0$ due to Theorem \ref{Prop_5} (i).
\par
 Proposition 5.4 of \citet{Brody_Hughston} leads to $L = + \infty$.
\end{proof}

The following table summarises the influence of the long-term yield on the long-term swap rate and long-term simple rate.

 \begin{table}[htpb]
 \centering
\begin{tabular}[c]{||c|c||c|c||}\hline
\small{If the long-term} & \small{With P} & \small{Then the long-term} & \small{Then the long-term} \\ 
\small{yield is} &  & \small{swap rate is} & \small{simple rate is}\\ \hline \hline
$\ell = 0$ & $0 < P < + \infty$ & $R = 0$ & $L = 0$\\ \hline
$\ell > 0$ & $P=0$ & $0 < R < + \infty$ & $L = + \infty$\\ \hline
$\ell= + \infty$ & $P=0$ & $0 < R < + \infty$ & $L = + \infty$\\ \hline
\end{tabular}
\caption{Influence of the long-term yield on long-term rates.}
\label{Table1}
 \end{table}
 \vspace{0.1cm}

\subsection{Influence of the Long-Term Swap Rate on Long-Term Rates}

After we investigated the influence of the long-term yield on the long-term swap rate and long-term simple rate, 
we are also interested in the other direction of this relation. 

\begin{prop}\label{Prop_18}
If $R_{t} = 0$ $\mathbb{P}$-a.s.\,\,for all $t\geq 0$, then $\ell_{t} \leq 0$ $\mathbb{P}$-a.s.\,\,for all $t\geq 0$. 
\end{prop}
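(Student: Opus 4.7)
The plan is to argue by contradiction, adapting the quantitative estimates from the proof of Theorem \ref{Prop_8} to a set on which $\ell_{t_0}$ is strictly positive. Suppose the conclusion fails: there exist $t_{0} \geq 0$ and $\alpha > 0$ such that $A \colonequals \{\ell_{t_{0}} > \alpha\}$ has $\mathbb{P}(A) > 0$. The goal is to derive a contradiction with the hypothesis $R_{t} = 0$ $\mathbb{P}$-a.s.\ for all $t \geq 0$.

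First I would invoke Theorem \ref{Prop_5} (i) in its contrapositive form: if $S_{n} \to S_{\infty}$ in ucp with $S_{\infty}$ finite, then $R_{t_{0}} = 1/S_{\infty}(t_{0}) > 0$, contradicting the hypothesis. Since by \eqref{Sum_Infinite_Bond_Prices} the only remaining possibility is $S_{n} \to +\infty$ in ucp, and $n \mapsto S_{n}(t_{0})$ is monotone non-decreasing, we deduce that $S_{n}(t_{0}) \to +\infty$ $\mathbb{P}$-a.s. It thus suffices to exhibit a subset of $A$ of positive probability on which $S_{\infty}(t_{0}) < +\infty$. Using the ucp convergence $Y(\cdot, T_{n}) \to \ell_{\cdot}$, for any $\epsilon \in (0, \alpha)$ and $A_{1}^{\epsilon, t_{0}, n}$ defined as in \eqref{def:A_2} with $u = t_{0}$, there exists $N_{\epsilon}$ such that $\mathbb{P}(A_{1}^{\epsilon, t_{0}, n}) > 1 - \delta(\epsilon)$ for $n \geq N_{\epsilon}$, with $\delta(\epsilon) \to 0$ as $\epsilon \to 0$. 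On $A \cap A_{1}^{\epsilon, t_{0}, n}$, the inequality $Y(t_{0}, T_{n}) \geq \ell_{t_{0}} - \epsilon > \alpha - \epsilon > 0$ yields the exponential bound $P(t_{0}, T_{n}) \leq \exp\left[-(\alpha - \epsilon)(T_{n} - t_{0})\right]$.

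Reproducing the conditional-probability computation of Theorem \ref{Prop_8} with $A$ in place of the full sample space and with the uniform lower bound $\alpha$ in place of $\ell_{0}$, the summability of the geometric series $\sum_{n} \exp\left[-(\alpha - \epsilon)(T_{n} - t_{0})\right]$ (which uses the equispaced tenor $T_{n} = n \delta$ assumed in this section) should yield
\[
\mathbb{P}\!\left(A \cap \left\{S_{\infty}(t_{0}) < +\infty\right\}\right) \geq \mathbb{P}(A)\left(1 - \delta(\epsilon)\right);
\]
letting $\epsilon \to 0$ then gives $\mathbb{P}\!\left(A \cap \{S_{\infty}(t_{0}) < +\infty\}\right) = \mathbb{P}(A) > 0$, the desired contradiction. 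The principal obstacle is this last step: the conditional-probability trick of Theorem \ref{Prop_8} must be carefully transferred to the event $A$, and one has to justify that a pointwise-in-$n$ exponential bound on $P(t_{0}, T_{n})$, holding on events $A_{1}^{\epsilon, t_{0}, n}$ whose probabilities tend to one (but without summable control on the complements), still delivers $\mathbb{P}$-a.s.\ summability of $\sum_{i} P(t_{0}, T_{i})$ on a subset of $A$ of positive probability.
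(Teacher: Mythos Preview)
Your first step—deducing $S_n \to +\infty$ in ucp from $R_t = 0$ via the contrapositive of Theorem~\ref{Prop_5}(i)—matches the paper's proof exactly. The difference lies in the second step. The paper does not re-open any estimates: having established $S_n \to +\infty$ in ucp, it simply invokes Theorems~\ref{Prop_8} and~\ref{Prop_14} by contraposition. Both of those theorems conclude that $S_n$ converges to a \emph{finite} $S_\infty$ in ucp under the hypothesis of a strictly positive long-term yield (finite in~\ref{Prop_8}, infinite in~\ref{Prop_14}); since $S_n \to +\infty$ has just been shown, those hypotheses must fail, yielding $\ell_t \leq 0$.

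Your hands-on route—localising the exponential bound from the proof of Theorem~\ref{Prop_8} to the event $A = \{\ell_{t_0} > \alpha\}$—is viable in spirit, but the obstacle you flag (passing from pointwise-in-$n$ bounds on the $n$-dependent events $A_1^{\epsilon,t_0,n}$ to almost-sure summability) is precisely the one already absorbed in the proof of Theorem~\ref{Prop_8} itself. Reproducing that machinery here is redundant: once $S_n \to +\infty$ in ucp is in hand, the direct appeal to the contrapositives of Theorems~\ref{Prop_8} and~\ref{Prop_14} is shorter and sidesteps the difficulty you are worried about. In short, your proposal is not wrong, but it re-derives what the paper has already packaged as citable results.
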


\begin{proof}
First, we show that $S_{n} \overset{n \rightarrow \infty}{\longrightarrow} +\infty$ in ucp. 
For this, let us assume $S_{n}$ converges in ucp. 
Then, according to Theorem \ref{Prop_5} (i) it is $0 < R_{t}$ $\mathbb{P}$-a.s.\,\,for all $t\geq 0$, but this is a contradiction and therefore $S_{n}$ converges to $+ \infty$ in ucp. 
\par
Consequently $\ell_{t} \leq 0$ $\mathbb{P}$-a.s.\,\,for all $t\geq 0$ due to Theorems \ref{Prop_8} and \ref{Prop_14}.
\end{proof}

Now, we investigate the behaviour of the long-term rates if the long-term swap rate is strictly positive.

\begin{prop}\label{Prop_20}
If $0 < R_{t} < +\infty$ $\mathbb{P}$-a.s.\,\,for all $t\geq 0$, then $\ell_{t} \geq 0$ and $L_{t} > 0$ $\mathbb{P}$-a.s.\,\,for all $t\geq 0$.

\end{prop}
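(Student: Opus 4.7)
The plan is to first extract the behaviour of $S_n$ and $P(\cdot,T_n)$ from the hypothesis $0<R_t<+\infty$, and then handle the two conclusions separately. Since $S_n$ is non-decreasing in $n$, its ucp limit exists in $(0,+\infty]$. If this limit were $+\infty$, Corollary~\ref{Cor_7} would give $R_t\leq 0$, contradicting the hypothesis. Hence $S_n\overset{n\rightarrow\infty}{\longrightarrow}S_\infty$ in ucp with $0<S_\infty(t)<+\infty$ $\mathbb{P}$-a.s.\,\,for all $t\geq 0$, and then $\delta P(\cdot,T_n)=S_n-S_{n-1}\longrightarrow 0$ in ucp, i.e.\,\,$P(\cdot,T_n)\longrightarrow 0$ in ucp.

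For $\ell_t\geq 0$ I would exploit $P(\cdot,T_n)\longrightarrow 0$ in ucp: for any $u>0$ and $\eta>0$, the event $\{\sup_{0\leq s\leq u}P(s,T_n)<1\}$ has probability exceeding $1-\eta$ for all sufficiently large $n$. On this event $Y(s,T_n)=-\log P(s,T_n)/(T_n-s)\geq 0$ for every $s\in[0,u]$, so passing to the ucp limit yields $\ell_s\geq 0$ for $s\in[0,u]$, hence for every $s\geq 0$ by letting $u\uparrow\infty$.

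The strict inequality $L_t>0$ is the main obstacle, because $P(t,T_n)\longrightarrow 0$ alone is compatible with many decay rates and does not directly pin down the limit of $L(t,T_n)=(1-P(t,T_n))/((T_n-t)P(t,T_n))$. I would argue by contradiction: suppose $L_t=0$ at some $t$. Since $L(t,T_n)\longrightarrow 0$ in probability and $1-P(t,T_n)\longrightarrow 1$ in probability, this forces $(T_n-t)P(t,T_n)\longrightarrow +\infty$ in probability. Consequently, for every fixed $M>0$ and $i$ sufficiently large,
\[
\mathbb{E}\!\left[P(t,T_i)\right]\geq \frac{M}{T_i-t}\,\mathbb{P}\!\left((T_i-t)P(t,T_i)>M\right)\geq \frac{M}{2(T_i-t)}.
\]
Summing over $i$ and invoking monotone convergence for the non-decreasing sequence $S_n(t)=\delta\sum_{i=1}^n P(t,T_i)$ gives $\mathbb{E}[S_\infty(t)]\geq (M\delta/2)\sum_{i\geq I}(T_i-t)^{-1}=+\infty$, contradicting $S_\infty(t)<+\infty$ $\mathbb{P}$-a.s.\ from the first paragraph. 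Combined with $L(t,T_n)\geq 0$ for $n$ large, this yields $L_t>0$.
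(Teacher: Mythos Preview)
Your reduction to $S_n\to S_\infty$ in ucp via Corollary~\ref{Cor_7} matches the paper exactly, and your argument for $\ell_t\geq 0$ (exploiting $P(\cdot,T_n)\to 0$ in ucp so that $Y(\cdot,T_n)\geq 0$ with high probability for large $n$) is correct and in fact more explicit than the paper, which simply defers to Propositions~3.2.3 and~3.2.9 of the external reference~\cite{Thesis}.

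The gap is in your argument for $L_t>0$. You correctly derive $\mathbb{E}[S_\infty(t)]=+\infty$ from the lower bound $\mathbb{E}[P(t,T_i)]\geq M/(2(T_i-t))$, but this does \emph{not} contradict $S_\infty(t)<+\infty$ $\mathbb{P}$-a.s.: an almost surely finite non-negative random variable can perfectly well have infinite expectation. So the contradiction you announce does not close. The difficulty is genuine: from $(T_n-t)P(t,T_n)\to+\infty$ merely \emph{in probability} one cannot extract a pathwise lower bound $P(t,T_i)\geq M/(T_i-t)$ along sufficiently many indices $i$ to force $\sum_i P(t,T_i)=+\infty$ almost surely; convergence in probability lets the exceptional sets wander with $i$, and passing to an a.s.\ convergent subsequence may render that subsequence too sparse for $\sum_k 1/T_{i_k}$ to diverge. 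The paper itself does not give a self-contained argument for this step either---it cites Proposition~3.2.11 of~\cite{Thesis}. To repair your approach you would need an additional integrability hypothesis (for instance $\mathbb{E}[S_\infty(t)]<+\infty$), or else a structural argument exploiting the martingale representation $P(t,T)=\beta_t\,\mathbb{E}^{\mathbb{P}}[\beta_T^{-1}\mid\mathcal{F}_t]$ to upgrade the in-probability divergence to something that contradicts $S_\infty(t)<+\infty$ pathwise.
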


\begin{proof}
We know from Corollary \ref{Cor_7} that $R_{t}\leq 0$ $\mathbb{P}$-a.s.\,\,for all $t\geq 0$ if $S_{n}$ converges to $+\infty$ in ucp. 
Hence if $R_{t} > 0$ $\mathbb{P}$-a.s.\,\,for all $t\geq 0$, we have $S_{n} \overset{n \rightarrow \infty}{\longrightarrow} S_{\infty}$ in ucp.
Then, according to Propositions 3.2.3 and 3.2.9 of \citet{Thesis} it holds $\mathbb{P}$-a.s.\,\,$\ell_{t} \geq 0$ for all $t\geq 0$.
\par
Further, $L_{t} > 0$ $\mathbb{P}$-a.s.\,\,for all $t\geq 0$ is a consequence of Proposition 3.2.11 of \citet{Thesis}. 

\end{proof}
The only case left now is a strictly negative long-term swap rate.

\begin{prop}\label{Prop_22}
If $-\infty < R_{t} < 0$ $\mathbb{P}$-a.s.\,\,for all $t\geq 0$, then $\ell_{t} \leq 0$ and $L_{t} = 0$ $\mathbb{P}$-a.s.\,\,for all $t\geq 0$.
\end{prop}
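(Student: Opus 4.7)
The plan is to proceed in three stages. First, I would show that $S_n \to +\infty$ in ucp. The limit \eqref{Sum_Infinite_Bond_Prices} always exists in ucp as finite or $+\infty$, and the finite case is excluded because Theorem~\ref{Prop_5}~(i) would then give $R_t = 1/S_{\infty}(t) > 0$, contradicting the hypothesis $R_t < 0$.

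Second, to deduce $\ell_t \leq 0$ I would apply pointwise contrapositives of Theorems~\ref{Prop_8} and~\ref{Prop_14}. If $\ell_{t_0} > 0$ for some $t_0$, then the exponential estimate \eqref{inequ:Prop_8_1}--\eqref{Prop_8_Inequality} from the proof of Theorem~\ref{Prop_8}, read only at $s = t_0$, yields $P(t_0, T_n) \leq \exp((\epsilon - \ell_{t_0})(T_n - t_0))$ on an event of probability at least $1 - \delta(\epsilon)$. The ratio test then gives $S_n(t_0) \to S_{\infty}(t_0) < +\infty$ in probability, so Theorem~\ref{Prop_5}~(i) applied at $t_0$ would force $R_{t_0} > 0$, contradicting $R_{t_0} < 0$. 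The case $\ell_{t_0} = +\infty$ is handled analogously via Theorem~\ref{Prop_14}.

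Third, for $L_t = 0$ I would exploit the identity $R(t, T_n) = 1/S_n(t) - P(t, T_n)/S_n(t)$ from the proof of Theorem~\ref{Prop_5}. Step~1 gives $1/S_n \to 0$ in ucp, while $R(\cdot, T_n) \to R_{\cdot}$ in ucp by hypothesis, so $P(\cdot, T_n)/S_n(\cdot) \to -R_{\cdot}$ in ucp with $-R_{\cdot} > 0$ a.s. Multiplying by $S_n \to +\infty$ in ucp then yields $P(\cdot, T_n) \to +\infty$ in ucp, hence $1/P(\cdot, T_n) \to 0$ in ucp. Combined with $1/(T_n - \cdot) \to 0$ uniformly on compacts, \eqref{Simple_def} gives $L(\cdot, T_n) \to 0$ in ucp, i.e.\ $L_t = 0$. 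The main obstacle I expect is this last ucp passage: pointwise the product of an almost-surely positive limit with $+\infty$ is immediate, but uniformly on $[0, t]$ it requires $\inf_{0 \leq s \leq t}(-R_s) > 0$, which is not automatic from the c\`{a}dl\`{a}g regularity of $R$ even though $-R_s > 0$ pointwise. I would circumvent this by first restricting to the event $\{\inf_{0 \leq s \leq t}(-R_s) > \eta\}$ for $\eta > 0$, whose probability tends to $1$ as $\eta \downarrow 0$, and passing to the limit $\eta \downarrow 0$ after taking the ucp convergence on that event.
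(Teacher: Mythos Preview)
Your first two steps coincide with the paper's argument: the paper also deduces $S_n \to +\infty$ in ucp from the contrapositive of Theorem~\ref{Prop_5}(i), and then invokes Theorems~\ref{Prop_8} and~\ref{Prop_14} to conclude $\ell_t \leq 0$.

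For the third step the paper takes a shorter, structural route and avoids the ucp manipulation you attempt. Having established $S_n \to +\infty$, it notes that if the long-term bond $P$ existed finitely, Theorem~\ref{Prop_5}(ii) would force $R_t = 0$, contradicting the hypothesis $R_t < 0$. Hence $P$ does not exist finitely, and Lemma~\ref{Lemma_SimpleRate} (which states that $P = +\infty$ implies $L_t = 0$) gives the conclusion. Your direct computation is in a sense more constructive, since it aims to exhibit $P(\cdot,T_n) \to +\infty$ explicitly rather than inferring it by elimination, but the paper's argument is cleaner and sidesteps the obstacle you identify.

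That obstacle is real, and your proposed workaround does not close it. The claim that $\mathbb{P}\bigl(\inf_{0 \leq s \leq t}(-R_s) > \eta\bigr) \to 1$ as $\eta \downarrow 0$ is equivalent to $\inf_{0 \leq s \leq t}(-R_s) > 0$ $\mathbb{P}$-a.s., and this does \emph{not} follow from the pointwise hypothesis $-R_s > 0$ together with c\`adl\`ag regularity: a c\`adl\`ag path can have a left limit equal to zero at some point while all its values are strictly positive. For instance, $R_s = s-1$ on $[0,1)$ and $R_s = -1$ for $s \geq 1$ is c\`adl\`ag with $R_s < 0$ everywhere, yet $\sup_{s \in [0,1]} R_s = 0$ because $R_{1-} = 0$. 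On such a path your restriction event $\{\inf_{0\leq s\leq t}(-R_s) > \eta\}$ has probability bounded away from~$1$ for every $\eta > 0$, and you retain no control on its complement. The paper's use of Theorem~\ref{Prop_5}(ii) and Lemma~\ref{Lemma_SimpleRate} circumvents this difficulty entirely.
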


\begin{proof}
First, we show that $S_{n} \overset{n \rightarrow \infty}{\longrightarrow} +\infty$ in ucp. 
We know from Theorem \ref{Prop_5} (i) that $R_{t} > 0$ $\mathbb{P}$-a.s.\,\,for all $t\geq 0$ if $S_{n}$ converges to $S_{\infty}$ in ucp, 
but this is a contradiction to $R_{t} < 0$ $\mathbb{P}$-a.s. for all $t\geq 0$. 
As a consequence of Theorems \ref{Prop_8} and \ref{Prop_14} it is $\ell_{t} \leq 0$ $\mathbb{P}$-a.s.\,\,for all $t\geq 0$. 
\par
Since  $S_{n} \overset{n \rightarrow \infty}{\longrightarrow} +\infty$ in ucp and $R<0$, by Theorem \ref{Prop_5} (ii) we get that $P$ cannot exist finitely, hence $L_{t} = 0$ $\mathbb{P}$-a.s.\,\,for all $t\geq 0$ because of Lemma \ref{Lemma_SimpleRate}. 
\end{proof}

In the table below we summarise the influence of the long-term swap rate on the other long-term rates by using the previous results as well as Lemma \ref{Lemma_SimpleRate}. 
Note, that $-\infty < R_{t} < +\infty$ $\mathbb{P}$-a.s.\,\,for all $t\geq 0$ by Theorem \ref{Prop_5} (iii). 
Hence, only three different cases have to be distinguished, 
$R_{t} = 0$, $0 < R_{t} < +\infty$, and $-\infty < R_{t} < 0$ $\mathbb{P}$-a.s.\,\,for all $t\geq 0$.

 \begin{table}[htpb]
 \centering
\begin{tabular}[c]{||c|c||c|c||}\hline
\small{If the long-term} & \small{With P} & \small{Then the long-term} & \small{Then the long-term} \\ 
\small{swap rate is} &  & \small{yield is} & \small{simple rate is}\\ \hline \hline
$R = 0$ & $0 \leq P < +\infty$ & $\ell \leq 0$ & $0 \leq L \leq +\infty$\\ \hline 
 $0 < R < +\infty$ & $P=0$ & $\ell \geq 0$ & $0 < L \leq +\infty$\\ \hline
$-\infty < R < 0$ & $P= +\infty$ & $\ell \leq 0$ & $L = 0$\\ \hline
\end{tabular}
\caption{Influence of the long-term swap rate on long-term rates.}
\label{Table2}
 \end{table}

\subsection{Influence of the Long-Term Simple Rate on Long-Term Rates}

Finally, we want to know about the influence of the long-term simple rate on long-term yields and long-term swap rates. 
Since $L_{t} \geq 0$ $\mathbb{P}$-a.s.\,\,for all $t\geq 0$, it is sufficient to investigate the three different cases, where 
$L_{t} = 0$, or $0 < L_{t} < +\infty$ $\mathbb{P}$-a.s.\,\,for all $t\geq 0$, or $L = +\infty$. 

\begin{Theorem}\label{Prop_24}
If $L_{t} \geq 0$ $\mathbb{P}$-a.s.\,\,for all $t\geq 0$, then $\ell_{t} \leq 0$  and $R_{t} \leq 0$  $\mathbb{P}$-a.s.\,\,for all $t\geq 0$.  
Furthermore, $R_{t} = 0$ $\mathbb{P}$-a.s.\,\,for all $t\geq 0$ if $P_{t} < +\infty$ $\mathbb{P}$-a.s.\,\,for all $t\geq 0$.
\end{Theorem}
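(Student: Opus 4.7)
The plan is to extract from the finiteness of $L$ a pathwise lower bound of the form $P\!\left(s,T_n\right) \geq \frac{c_u}{T_n - s}$ for $s \in \left[0,u\right]$, valid for large $n$ on a set of probability close to $1$. This single estimate controls both conclusions simultaneously: taking logarithms it forces $\ell \leq 0$; summed over the tenor it forces $S_n \to +\infty$ in ucp, whence Corollary \ref{Cor_7} yields $R \leq 0$. The third assertion is then a direct application of Theorem \ref{Prop_5}(ii).

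Step 1 (the lower bound on $P$). Rewriting \eqref{Simple_def} gives $\frac{1}{(T_n-s)P(s,T_n)} = L(s,T_n) + \frac{1}{T_n - s}$. In the style of \eqref{def:A_2} and \eqref{def:A_3}, I would define, for $\epsilon > 0$ and $u \geq t$, the set $A^{\epsilon,u,n} \colonequals \left\{\sup_{0\leq s\leq u}\left\vert L(s,T_n) - L_s\right\vert \leq \epsilon\right\}$, whose probability exceeds $1 - \delta(\epsilon)$ for $n$ large enough by the ucp convergence $L(\,\cdot\,,T_n) \to L_\cdot$ and Definition \ref{UCP_Def_2}. Because $L_\cdot$ is c\`adl\`ag (Remark \ref{Remark_1}) and $\mathbb{P}$-a.s. finite by hypothesis, $M_u \colonequals \sup_{0\leq s\leq u} L_s + \epsilon$ is $\mathbb{P}$-a.s. finite, and on $A^{\epsilon,u,n}$, for $n$ large enough that $\frac{1}{T_n - s} \leq 1$, one obtains $P(s,T_n) \geq \frac{1}{(M_u + 1)(T_n - s)}$ uniformly in $s \in \left[0,u\right]$.

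Step 2 (bounding $\ell$) and Step 3 (bounding $R$). Taking logarithms in the bound of Step 1 gives $Y(s,T_n) = -\frac{\log P(s,T_n)}{T_n - s} \leq \frac{\log\!\left((M_u+1)(T_n - s)\right)}{T_n - s}$, which tends to $0$ as $n \to \infty$ uniformly on $\left[0,t\right]$ for any $t < u$; combining this with the high-probability argument above yields $\ell_t \leq 0$ $\mathbb{P}$-a.s. for all $t \geq 0$. For the swap rate, the same bound and the crude estimate $T_i - s \leq T_i$ give $P(s,T_i) \geq \frac{1}{(M_u+1) T_i}$, so $\inf_{0\leq s\leq t} S_n(s) \geq \delta \sum_{i=1}^{n} \frac{1}{(M_u+1) T_i}$, which diverges by comparison with the harmonic series under the equidistant tenor assumption $T_i - T_{i-1} = \delta$. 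Hence $S_n \overset{n \rightarrow \infty}{\longrightarrow} +\infty$ in ucp, and Corollary \ref{Cor_7} delivers $-\infty < R_t \leq 0$ $\mathbb{P}$-a.s. The additional claim follows immediately: if also $P_t < +\infty$ $\mathbb{P}$-a.s., the hypothesis of Theorem \ref{Prop_5}(ii) is met and $R_t = 0$.

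The main delicate step will be the ucp bookkeeping around the set $A^{\epsilon,u,n}$, in particular the verification that $\sup_{0\leq s\leq u} L_s$ is $\mathbb{P}$-a.s. finite so that $M_u$ is a legitimate pathwise constant; this is where the hypothesis that $L$ exists \emph{finitely} (and not just $L \geq 0$ as a trivial bound) is genuinely used. The conditional decomposition argument that then converts high probability of $A^{\epsilon,u,n}$ into the asymptotic statements can be carried out verbatim along the template of Theorems \ref{Prop_8} and \ref{Prop_14}.
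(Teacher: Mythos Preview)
Your proposal is correct and follows essentially the same skeleton as the paper: extract from the ucp convergence of $L(\,\cdot\,,T_n)$ a high-probability lower bound $P(s,T_i) \gtrsim \frac{1}{T_i}$, deduce $S_n \to +\infty$ in ucp by comparison with the harmonic series, and then invoke Corollary~\ref{Cor_7} and Theorem~\ref{Prop_5}(ii) for the swap-rate conclusions. The bound you state, $P(s,T_n) \geq \frac{1}{(M_u+1)(T_n-s)}$ with $M_u = \sup_{0\leq s\leq u} L_s + \epsilon$, is the general-$L$ version of the paper's bound $P(s,T_i) \geq \frac{1}{1+\epsilon T_i}$ (written there for the special case $L \equiv 0$).

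The one genuine difference is in how $\ell_t \leq 0$ is obtained. The paper does not argue directly on the yield: once $S_n \to +\infty$ in ucp is established, it simply invokes the contrapositive of Theorems~\ref{Prop_8} and~\ref{Prop_14} (both of which say that $\ell > 0$ or $\ell = +\infty$ force $S_n \to S_\infty < +\infty$). You instead take logarithms in the lower bound for $P$ and show $Y(s,T_n) \leq \frac{\log((M_u+1)(T_n-s))}{T_n-s} \to 0$ directly. Your route is more self-contained and yields the yield bound and the $S_n$-divergence from a single estimate; the paper's route is shorter on the page because the heavy lifting has already been packaged into the earlier theorems. Both are perfectly valid.
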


\begin{proof}
For the sake of simplicity we prove the result for the case $L_{t}=0$ $\mathbb{P}$-a.s.\,\,for all $t\geq 0$. The general case for $0\leq L<+\infty$ is completely analogous.
First, we show that $S_{n} \overset{n \rightarrow \infty}{\longrightarrow} +\infty$ in ucp. 
We know that for all $t\geq 0$ and all $\epsilon > 0$ it holds 
\begin{align*}
 \mathbb{P}\!\left(\sup_{0\leq s\leq t} \left\vert L\!\left(s,T_{n}\right) \right\vert \leq \epsilon\right) 
&  \overset{n \rightarrow \infty}{\longrightarrow} 1,
\end{align*}
i.e.\,\,by \eqref{Simple_def} for all $t\geq 0$ and all $\epsilon > 0$ there exists $N_{\epsilon}^{t} \in \mathbb{N}$ such that for all $n \geq N_{\epsilon}^{t}$ 
\begin{equation}\label{inequ:Prop_24_1}
  \mathbb{P}\!\left(\sup_{0\leq s\leq t} \left\vert \frac{1}{T_{n}-s}\left(\frac{1}{P\!\left(s,T_{n}\right)}- 1\right) \right\vert \leq \epsilon \right) > 1 - \delta\!\left(\epsilon\right)
\end{equation}
with $\delta\!\left(\epsilon\right) \rightarrow 0$ for $\epsilon \rightarrow 0$. 
Define for $\epsilon > 0$, $u\geq 0$ and $n \in \mathbb{N}$
\begin{equation}\label{def:A_4}
 A^{\epsilon, u, n}_{3} \colonequals  \left\{\omega \in \Omega: \sup_{0\leq s\leq u} \left\vert \frac{1}{T_{n}-s}\left(\frac{1}{P\!\left(s,T_{n}\right)}- 1\right) \right\vert \leq \epsilon \right\}.
\end{equation}
Let us define for $t\geq 0$ 
 \begin{equation*}
  B_{3}\!\left(t\right) \colonequals \left\{\omega \in \Omega: \lim_{n \rightarrow \infty} \inf_{0\leq s\leq t} S_{n}\!\left(s\right) = +\infty\right\}.
 \end{equation*}
For $t < u$ and $n \geq N_{\epsilon}^{u}$ we then obtain
\begin{align*}
 \mathbb{P}\!\left(B_{3}\!\left(t\right)\right) 
  & \overset{\phantom{\eqref{def:A_4}}}{\geq} \mathbb{P}\!\left(\left.\lim_{n \rightarrow \infty} \sum_{i=N_{\epsilon}^{u}}^{n}\! \inf_{0\leq s\leq t} P\!\left(s,T_{i}\right) = +\infty \right| A_{3}^{\epsilon, u, n}\right) \mathbb{P}\!\left(A_{3}^{\epsilon, u, n}\right) \nonumber\\
  & \overset{\eqref{def:A_4}}{\geq} \mathbb{P}\!\left(\left.\lim_{n \rightarrow \infty} \sum_{i=N_{\epsilon}^{u}}^{n}\! \frac{1}{1 + \epsilon\,T_{i}} = +\infty \right| A_{3}^{\epsilon, u, n}\right) \mathbb{P}\!\left(A_{3}^{\epsilon, u, n}\right) \nonumber\\
  & \overset{\phantom{\eqref{def:A_4}}}{\geq} \left(1 - \delta\!\left(\epsilon\right)\right) 
 \rightarrow 1
\end{align*}
for $\epsilon \rightarrow 0$. That means $S_{n} \overset{n \rightarrow \infty}{\longrightarrow} +\infty$ in ucp and consequently 
$\ell_{t} \leq 0$ $\mathbb{P}$-a.s.\,\,for all $t\geq 0$ due to Theorems \ref{Prop_8} and \ref{Prop_14}.
\par
The behaviour of the long-term swap rate is a direct consequence of Theorem \ref{Prop_5} (ii) and Corollary \ref{Cor_7}.
\end{proof}

Lastly, we are interested in the influence of an exploding long-term simple rate on the long-term yield and long-term swap rate.

\begin{Theorem}\label{Prop_28}
If $L = +\infty$, then $\ell_{t} \geq 0$ and $R_{t} > 0$ $\mathbb{P}$-a.s.\,\,for all $t\geq 0$.
\end{Theorem}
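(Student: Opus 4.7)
The plan is to follow the template of the proofs of Theorems~\ref{Prop_14} and~\ref{Prop_24}: translate the ucp hypothesis $L = +\infty$ into a uniform pointwise bound on $P(s, T_{n})$, use that bound to conclude $S_{n} \to S_{\infty}$ finitely in ucp, and then read off $R_{t} > 0$ from Theorem~\ref{Prop_5}(i) and $\ell_{t} \geq 0$ from the resulting lower estimate on the yield.

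Concretely, the improper ucp definition used at~\eqref{UCP_Def_2_1}, applied to $L = +\infty$, provides for every $t \geq 0$ and every $M > 0$ an index $N^{t}_{M} \in \mathbb{N}$ such that for all $n \geq N^{t}_{M}$,
\[
 \mathbb{P}\!\left(\inf_{0 \leq s \leq t} L(s, T_{n}) > M\right) > 1 - \delta(M),
\]
with $\delta(M) \to 0$ as $M \to \infty$. By~\eqref{Simple_def}, on the event $A^{M,t,n} \colonequals \{\inf_{0 \leq s \leq t} L(s, T_{n}) > M\}$ one has $P(s, T_{n}) < 1/(1 + M(T_{n} - s))$ uniformly in $s \in [0, t]$. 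Mirroring the conditional-probability chain in the proof of Theorem~\ref{Prop_14}, I introduce $B_{1}(t)$ as in~\eqref{equ:B_1}, split on $A^{M,u,n}$ for $u > t$, and use the bound above to estimate the tail $\sum_{i \geq N^{u}_{M}} \sup_{0 \leq s \leq t} P(s, T_{i})$ by a convergent series, obtaining $\mathbb{P}(B_{1}(t)) \geq 1 - \delta(M) \to 1$ as $M \to \infty$. This yields $S_{n} \to S_{\infty}$ finitely in ucp and, via Theorem~\ref{Prop_5}(i), the desired conclusion $R_{t} = 1/S_{\infty}(t) > 0$ $\mathbb{P}$-a.s.\,\,for all $t \geq 0$.

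The long-term yield is then handled using the same bound: on $A^{M,t,n}$,
\[
 Y(s, T_{n}) \;=\; -\frac{\log P(s, T_{n})}{T_{n} - s} \;>\; \frac{\log\!\left(1 + M(T_{n} - s)\right)}{T_{n} - s} \;\overset{n \to \infty}{\longrightarrow}\; 0,
\]
so $Y(s, T_{n}) > -\eta$ for arbitrary $\eta > 0$ on events of probability approaching $1$; passing to the ucp limit yields $\ell_{s} \geq 0$ $\mathbb{P}$-a.s. Equivalently, the contrapositive of Theorem~\ref{Prop_24} (in its general form for $0 \leq L < +\infty$) excludes $\ell < 0$ once $L = +\infty$ is assumed.

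The main obstacle is the $S_{n}$-step: the pointwise bound $P(s, T_{n}) < 1/(1 + M(T_{n} - s))$ has terms of order $1/T_{n}$ and is \emph{not} summable for any fixed $M$, so the fixed-$\epsilon$ ratio-test argument of Theorem~\ref{Prop_14} does not transfer verbatim. The delicate point is to let $M = M_{n}$ grow along with $n$, via a diagonal construction that exploits the rate of convergence in~\eqref{UCP_Def_2_1}, so that $\sum_{i} 1/(1 + M_{i}(T_{i} - s))$ converges while the probabilities of the corresponding events still tend to $1$. Reconciling this growth with the $\delta(\cdot) \to 0$ estimate is the crux of the argument.
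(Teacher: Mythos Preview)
Your outline follows the paper's proof closely. The paper also argues via $S_n\to S_\infty$ in ucp, but it packages the hypothesis slightly differently: from $L=+\infty$ it extracts, for every $\epsilon>0$, an index $N_\epsilon^t$ with $\mathbb{P}\bigl(T_n\sup_{0\le s\le t}P(s,T_n)\le\epsilon\bigr)>1-\delta(\epsilon)$ and $\delta(\epsilon)\to0$ as $\epsilon\to0$, sets $A_5^{\epsilon,u,n}=\{T_n\sup_{0\le s\le u}P(s,T_n)\le\epsilon\}$, and runs the same conditional-probability chain on $B_1(t)$ that you describe. After $S_n\to S_\infty$, the paper stops; $R_t>0$ is read off from Theorem~\ref{Prop_5}(i), and $\ell_t\ge0$ is not argued separately but follows from Proposition~\ref{Prop_20}. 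Your direct yield estimate and the contrapositive of Theorem~\ref{Prop_24} are therefore alternatives to, not departures from, the paper's route.

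The obstacle you single out is real and is precisely where the paper's reformulation does \emph{not} help: the event $A_5^{\epsilon,u,n}$ gives $\sup_s P(s,T_n)\le\epsilon/T_n$, which is the same $O(1/T_n)$ bound as your $1/(1+M(T_n-s))$, and $\sum_i 1/T_i$ diverges for the tenor structures considered. The paper's written proof simply asserts $\mathbb{P}(B_1(t))\ge 1-\delta(\epsilon)$ without addressing this summability, so the template of Theorem~\ref{Prop_14} is being invoked in a situation where its geometric tail is absent. Your diagonal idea---letting the threshold vary with $i$ so that $\sum_i \epsilon_i/T_i<\infty$ while still controlling the intersection of the events---is the natural repair, and the difficulty you name (reconciling the growth of $M_i$ with control of $\bigcap_i A^{M_i,u,i}$, not just the individual probabilities) is exactly the point that neither argument, as stated, resolves. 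On this issue your proposal is more candid than the paper's own proof.
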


\begin{proof}
We show that $S_{n} \overset{n \rightarrow \infty}{\longrightarrow} S_{\infty}$ in ucp.
We know by \eqref{UCP_Def_2_1} that for all $t\geq 0$ and all $M > 0$
\begin{equation*}
\mathbb{P}\!\left(\inf_{0\leq s\leq t}  L\!\left(s,T_{n}\right)  > M\right)   \overset{n \rightarrow \infty}{\longrightarrow} 1.
\end{equation*}
Hence it holds by \eqref{Simple_def} for all $t\geq 0$ and all $\epsilon > 0$ that there exists $N_{\epsilon}^{t} \in \mathbb{N}$ such that for all $n \geq N_{\epsilon}^{t}$
\begin{equation}\label{inequ:Prop_28_1}
  \mathbb{P}\!\left( T_{n} \sup_{0\leq s\leq t}  P\!\left(s,T_{n}\right) \leq \epsilon \right) > 1 - \delta\!\left(\epsilon\right)
\end{equation}
with $\delta\!\left(\epsilon\right) \rightarrow 0$ for $\epsilon \rightarrow 0$.
Then, let us define for $\epsilon > 0$, $u\geq 0$ and $n \in \mathbb{N}$
\begin{equation}\label{def:A_6}
A^{\epsilon, u, n}_{5} \colonequals  \left\{\omega \in \Omega: T_{n} \sup_{0\leq s\leq u}  P\!\left(s,T_{n}\right) \leq \epsilon \right\}.
\end{equation}
For $t < u$ and $n \geq N_{\epsilon}^{u}$ we obtain by \eqref{inequ:Prop_28_1} with $B_{1}\!\left(t\right)$ defined as in \eqref{equ:B_1} that 
\begin{align*}
\mathbb{P}\!\left(B_{1}\!\left(t\right)\right)
  & \overset{\phantom{\eqref{def:A_6}}}{\geq} \mathbb{P}\!\left(\left.\lim_{n \rightarrow \infty} \sum_{i=N_{\epsilon}^{u}}^{n}\! \sup_{0\leq s\leq t} P\!\left(s,T_{i}\right) < +\infty \right| A_{5}^{\epsilon, u, n}\right) \mathbb{P}\!\left(A_{5}^{\epsilon, u, n}\right) \nonumber\\
 & \overset{\phantom{\eqref{def:A_6}}}{\geq} \left(1 - \delta\!\left(\epsilon\right)\right)  \rightarrow 1
\end{align*}
for $\epsilon \rightarrow 0$.
\end{proof}

Table \ref{Table3} summarises the influence of the long-term simple rate on the other long-term rates. 

 \begin{table}[htpb]
 \centering
\begin{tabular}[c]{||c|c||c|c||}\hline
\small{If the long-term} & \small{With P} & \small{Then the long-term} & \small{Then the long-term} \\ 
\small{simple rate is} &  & \small{yield is} & \small{swap rate is}\\ \hline \hline
$0 \leq L < +\infty$ & $0 \leq P < +\infty$ & $\ell \leq 0$ & $R = 0$\\ \hline
$0 \leq L < +\infty$ & $P = +\infty$ & $\ell \leq 0$ & $-\infty < R \leq 0$\\ \hline
$L = +\infty$ & $P = 0$ & $\ell \geq 0$ & $0 < R < +\infty$\\ \hline
\end{tabular}
\caption{Influence of the long-term simple rate on long-term rates.}
\label{Table3}
\end{table}

\section{Long-Term Rates in Specific Term Structure Models}\label{Specific_Term_Structure}

In this section we compute the long-term interest rates in two specific models, the Flesaker-Hughston model and the linear-rational model. 
We note that this class of models also includes affine interest rate models.
Let $\left(\Omega, \mathcal{F}, \mathbb{F}, \mathbb{P}\right)$ be the filtered probability space introduced in Section \ref{Rates}.

\subsection{Long-Term Rates in the Flesaker-Hughston Model}\label{Fle_Hugh}

We now derive the long-term swap rate in the Flesaker-Hughston interest rate model. 
The model has been introduced in \citet{Flesaker_Hughston1996} and further developed in \citet{article_MusielaRutkowski1997} and \citet{Rutkowski1997}, see also \citet{Rogers}. 
Main advantages of this approach are that it specifies non-negative interest rates only and has a high degree of tractability. 
Another appealing feature is that besides relatively simple models for bond prices, short and forward rates, there are closed-form formulas for caps, floors and swaptions available. 
In the following, we first shortly outline the generalised Flesaker-Hughston model that is explained in detail in \citet{Rutkowski1997} and then consider two specific cases. 
The basic input of the model is a strictly positive supermartingale $A$ on $\left(\Omega, \mathcal{F}, \mathbb{F}, \mathbb{P}\right)$ 
which represents the state price density, so that the zero-coupon bond price can be expressed as 
\begin{equation}\label{FH_Equ_1}
 P\!\left(t,T\right) = \frac{\mathbb{E}^{\mathbb{P}}\!\left[\left.A_{T}\right| \mathcal{F}_{t}\right]}{A_{t}}\,,\ 0 \leq t \leq T\,,
\end{equation}
for all $T \geq 0$.
It immediately follows $P\!\left(T,T\right) = 1$ for all $T \geq 0$ and $P\!\left(t,U\right) \leq P\!\left(t,T\right)$ for all $0 \leq t \leq T \leq U$, 
i.e.\,\,the zero-coupon bond price is a decreasing process in the maturity. 
This choice guarantees positive forward and short rates for all maturities (cf.\,\,equations (10) and (11) of \citet{Flesaker_Hughston1996}). 
To model the long-term yield and swap rate in this methodology a specific choice of $A$ is needed. 
For this matter, we focus on two special cases presented in Section 2.3 of \citet{Rutkowski1997}.

\begin{example}

The supermartingale $A$ is given by 
\begin{equation}\label{FH_Equ_2}
 A_{t} = f\!\left(t\right) + g\!\left(t\right) M_{t}\,,\ t \geq 0\,,
\end{equation}
where $f,g : \mathbb{R}_{+} \rightarrow \mathbb{R}_{+}$ are strictly positive decreasing functions 
and $M$ is a strictly positive martingale defined on $\left(\Omega, \mathcal{F}, \mathbb{F}, \mathbb{P}\right)$, with $M_{0} = 1$. 
We shall consider in the sequel a c\`{a}dl\`{a}g version of $M$.
It follows from \eqref{FH_Equ_1} that for all $0 \leq t \leq T$
\begin{equation}\label{FH_Equ_3}
 P\!\left(t,T\right) = \frac{f\!\left(T\right) + g\!\left(T\right) M_{t}}{f\!\left(t\right) + g\!\left(t\right) M_{t}}\,.
\end{equation}
The initial yield curve can easily be fitted by choosing strictly positive decreasing functions $f$ and $g$ in such a way that 
\begin{equation}\label{eq_bond_0}
 P\!\left(0,T\right) = \frac{f\!\left(T\right) + g\!\left(T\right)}{f\!\left(0\right) + g\!\left(0\right)}
\end{equation}
for all $T \geq 0$.
\par
For the calculations of the long-term yield and swap rate, we assume that the following conditions on the asymptotic behaviour of $f$ and $g$ hold:
\begin{equation}\label{FH_Equ_4}
F \colonequals \sum_{i=1}^{\infty} f\!\left(T_{i}\right) < +\infty\,,\ \ \ \ \ \ G \colonequals \sum_{i=1}^{\infty} g\!\left(T_{i}\right) < +\infty\,,
\end{equation}
with $F,G \in \mathbb{R}_{+}$.
\par
From \eqref{FH_Equ_4} it follows immediately that $\lim_{t \rightarrow \infty} f\!\left(t\right) = \lim_{t \rightarrow \infty} g\!\left(t\right) = 0$ 
and hence $P_{t} = 0$ for all $t\geq 0$. 
This condition is assumed in \citet{Flesaker_Hughston1996}, whereas here it follows from \eqref{FH_Equ_4}. 
We also get for all $t\geq 0$
\begin{equation*}
S_{\infty}\!\left(t\right) = \delta \, \frac{F + G M_{t}}{f\!\left(t\right) + g\!\left(t\right) M_{t}}\,\, \text{ $\mathbb{P}$-a.s.}
\end{equation*}
since 
\begin{align*}
 & \sup_{0\leq s\leq t} \left\vert \sum_{i=1}^{n} \frac{f\!\left(T_{i}\right) + g\!\left(T_{i}\right) M_{s}}{f\!\left(s\right) + g\!\left(s\right) M_{s}} - \frac{F + G M_{s}}{f\!\left(s\right) + g\!\left(s\right) M_{s}}\right\vert \\
 & \phantom{===} = \sup_{0\leq s\leq t} \left\vert \frac{M_{s}}{f\!\left(s\right) + g\!\left(s\right) M_{s}} \left(\sum_{i=1}^{n}\!g\!\left(T_{i}\right) - G\right) + \frac{\sum_{i=1}^{n}\!f\!\left(T_{i}\right) - F}{f\!\left(s\right) + g\!\left(s\right) M_{s}} \right\vert \\
 & \phantom{===} \rightarrow 0\,\, \text{ $\mathbb{P}$-a.s.}
\end{align*}
for all $t\geq 0$, hence in probability because 
\begin{equation*}
\sup_{0\leq s\leq t} \frac{M_{s}}{f\!\left(s\right) + g\!\left(s\right) M_{s}} \leq \sup_{0\leq s\leq t} \frac{M_{s}}{g\!\left(s\right) M_{s}} \leq \frac{1}{g\!\left(t\right)} < +\infty.
\end{equation*}
Then, by Proposition \ref{Prop_5} it holds $\mathbb{P}$-a.s.
\begin{equation}\label{FH_Equ_6}
R_{t} = \frac{f\!\left(t\right) + g\!\left(t\right) M_{t}}{\delta \left(F + G M_{t}\right)},\,\, t\geq 0.
\end{equation}
Now, we also want to compute the long-term yield in this model specification. It is for all $t\geq 0$
\begin{equation}\label{FH_Equ_7}
\ell_{\cdot} \overset{\eqref{Long_Yield}}{=} \lim_{T \rightarrow \infty} Y\!\left(\,\cdot\,,T\right) \overset{\eqref{Yield_def_1}}{\underset{\eqref{FH_Equ_3}}{=}} 
- \lim_{T \rightarrow \infty} T^{-1} \log\! \left(f\!\left(T\right) + g\!\left(T\right) M_{\cdot}\right)\,\,\text{ in ucp.}
\end{equation}
We know from Proposition \ref{Prop_20} that $\ell_{t} \geq 0$ $\mathbb{P}$-a.s.\,\,for all $t\geq 0$ since the long-term swap rate is strictly positive due to \eqref{FH_Equ_6} and $P$ vanishes. 
\par
Let us consider a simple example, where $f\!\left(t\right) = \exp\!\left(- \alpha t\right)$, $g\!\left(t\right) = \exp\!\left(- \beta t\right)$ with $0 < \alpha < \beta$. 
Then $f$ and $g$ are decreasing strictly positive functions and the ratio test shows that the infinite sums of $f$ and $g$  exist. 
We denote them by 
\begin{equation*}
\alpha_{\infty} \colonequals \sum_{i=1}^{\infty} \exp\!\left(- \alpha T_{i}\right) \,,\ \ \ \ \ \ \beta_{\infty} \colonequals \sum_{i=1}^{\infty} \exp\!\left(- \beta T_{i}\right)
\end{equation*}
with $0 < \beta_{\infty} \leq \alpha_{\infty}$.
Hence all required conditions are fulfilled and we get the following equations for the long-term swap rate and the long-term yield, respectively
\begin{equation*}
R_{t} \overset{\eqref{FH_Equ_6}}{=} \frac{\exp\!\left(- \alpha t\right) + \exp\!\left(- \beta t\right) M_{t}}{\delta \left(\alpha_{\infty} + \beta_{\infty} M_{t}\right)},\,\, t\geq 0,
\end{equation*}
and
\begin{align*}
\ell_{\cdot} & \overset{\eqref{FH_Equ_7}}{=} - \lim_{T \rightarrow \infty} T^{-1} \log\! \left(f\!\left(T\right) + g\!\left(T\right) M_{\cdot}\right) \nonumber\\
& \overset{\phantom{\eqref{FH_Equ_7}}}{=} - \lim_{T \rightarrow \infty} T^{-1} \log\! \left(\exp\!\left(-\alpha T\right)\left(1+ \exp\left(-\left(\beta - \alpha\right)T\right) M_{\cdot}\right)\right) \nonumber\\
&  \overset{\phantom{\eqref{FH_Equ_7}}}{=} \alpha + \lim_{T \rightarrow \infty} T^{-1} \log\! \left(1 + \exp\!\left(-\left(\beta - \alpha\right)T\right) M_{\cdot}\right) 
= \alpha\,\,\text{ in ucp.}
\end{align*}
It follows by Theorem \ref{Prop_8} that $L\!\left(t,T_{n}\right) \overset{n \rightarrow \infty}{\longrightarrow} +\infty$ in ucp. 
This result can also be obtained by direct computation since for all $t\geq 0$
\begin{equation*}
 \sup_{0\leq s\leq t} \frac{\exp\!\left(-\alpha s\right) + \exp\!\left(-\beta s\right) M_{s}}{T \exp\!\left(-\alpha T\right) + T \exp\!\left(-\beta T\right) M_{s}} \overset{T \rightarrow \infty}{\longrightarrow} +\infty\,\, \text{ $\mathbb{P}$-a.s.,}
\end{equation*}
i.e.\,\,in probability, since $M$ is c\`{a}dl\`{a}g.

\end{example}

\begin{example}

In the second special case of the Flesaker-Hughston model the supermartingale $A$ is defined as 
\begin{equation*}
 A_{t} = \int_{t}^{\infty}\!\phi\!\left(s\right) M\!\left(t,s\right) \, ds \,,\ t \geq 0\,,
\end{equation*}
where for every $s > 0$ the process $M\!\left(t,s\right), t \leq s$, is a strictly positive martingale on $\left(\Omega, \mathcal{F}, \mathbb{F}, \mathbb{P}\right)$ 
with $M\!\left(0,s\right) = 1$ such that $\int_{0}^{\infty}\!\phi\!\left(s\right) M\!\left(t,s\right) \, ds < +\infty$ $\mathbb{P}$-a.s.\,\,for all $t\geq 0$
and $\phi : \mathbb{R}_{+} \rightarrow \mathbb{R}_{+}$ is a strictly positive continuous function. 
From \eqref{FH_Equ_1} follows for $0\leq t\leq T$
\begin{equation}\label{FH_Equ2_2}
 P\!\left(t,T\right) = \frac{\int_{T}^{\infty}\!\phi\!\left(s\right) M\!\left(t,s\right) \, ds}{\int_{t}^{\infty}\!\phi\!\left(s\right) M\!\left(t,s\right) \, ds}\,,
\end{equation}
for all $T\geq 0$.
By differentiation of the zero-coupon bond price with respect to the maturity date, we see that the initial term structure satisfies 
$\phi\!\left(t\right) = - \frac{\partial P\left(0,t\right)}{\partial t}$ (cf.\,\,equation (6) of \citet{Flesaker_Hughston1996}). 
\par
According to \eqref{FH_Equ2_2} we get that $P_{t} = 0$ $\mathbb{P}$-a.s.\,\,for all $t \geq 0$.
We define $Q_{n} \colonequals \left(Q_{n}\!\left(t\right)\right)_{t\geq 0}$ for all $n\geq 0$ with 
\begin{equation*}
Q_{n}\!\left(t\right) \colonequals \sum_{i = 1}^{n} \! \int_{T_{i}}^{\infty}\!\phi\!\left(s\right) M\!\left(t,s\right) \, ds
\end{equation*}
and assume that for $Q \colonequals \left(Q\!\left(t\right)\right)_{t\geq 0}$ we have
\begin{equation*}
Q\!\left(t\right) \colonequals \sum_{i = 1}^{\infty} \! \int_{T_{i}}^{\infty}\!\phi\!\left(s\right) M\!\left(t,s\right) \, ds\, < +\infty
\end{equation*}
for all $t\geq 0$, and that $Q_{n} \overset{n \rightarrow \infty}{\longrightarrow} Q$ in ucp.
Then, we get $S_{n} \overset{n \rightarrow \infty}{\longrightarrow} S_{\infty} < +\infty$ in ucp and 
hence the convergences of the long-term swap rate and the long-term yield hold also in ucp.
Due to Theorem \ref{Prop_5} (i) the long-term swap rate is 
\begin{equation}\label{FH_Equ2_4}
R_{t} = \frac{\int_{t}^{\infty}\!\phi\!\left(s\right) M\!\left(t,s\right) \, ds}{\delta\,\sum_{i = 1}^{\infty} \! \int_{T_{i}}^{\infty}\!\phi\!\left(s\right) M\!\left(t,s\right) \, ds}\,, t\geq 0\,.
\end{equation}
Now, we again want to know the long-term yield in this case. 
It holds
\begin{equation*}
\ell_{\cdot} = - \lim_{T \rightarrow \infty} T^{-1} \log\!\left(\int_{T}^{\infty}\!\phi\!\left(s\right) M\!\left(\,\cdot\,,s\right) \, ds\right)\,\, \text{ in ucp.}
\end{equation*}
From Proposition \ref{Prop_20} we know that $\ell_{t} \geq 0$ $\mathbb{P}$-a.s.\,\,for all $t\geq 0$ since $R_{t} > 0$ $\mathbb{P}$-a.s.\,\,for all $t\geq 0$ due to \eqref{FH_Equ2_4} and $P$ vanishes. 
Further, $L_{t} \geq 0$ $\mathbb{P}$-a.s.\,\,for all $t\geq 0$ by Proposition \ref{Prop_20}.

\end{example}

\subsection{Long-Term Rates in the Linear-Rational Methodology}

The class of linear-rational term structure models is introduced in \citet{Filipovic2014} for the first time. 
This class presents several advantages: it is highly tractable and offers a very good fit to interest rate swaps and swaptions data since 1997. 
Further, non-negative interest rates are guaranteed, unspanned factors affecting volatility and risk premia are accommodated, and analytical solutions to swaptions are admitted. 
\par
We assume the existence of a state price density, i.e.\,\,of a positive adapted process $A \colonequals \left(A_{t}\right)_{t\geq 0}$ on $\left(\Omega, \mathcal{F}, \mathbb{F}, \mathbb{P}\right)$ 
such that the price $\Pi\!\left(t,T\right)$ at time $t$ of any time $T$ cashflow $C_{T}$ is given by 
\begin{equation}\label{LR_Equ_1}
 \Pi\!\left(t,T\right) = \frac{\mathbb{E}^{\mathbb{P}}\!\left[\left.A_{T} C_{T}\right| \mathcal{F}_{t}\right]}{A_{t}}\,,\ 0 \leq t \leq T\,,
\end{equation}
for all $T \geq 0$.
In particular we suppose that the state price density $A$ is driven by a multivariate factor process $X \colonequals \left(X_{t}\right)_{t\geq 0}$ with state space $E \subseteq \mathbb{R}^{d}$, $d \geq 1$, where 
\begin{equation}\label{LR_Equ_2}
 dX_{t} = k \left(\theta - X_{t}\right) dt + dM_{t}\,,\ t\geq 0\,,
\end{equation}
for some $k \in \mathbb{R}_{+}$, $\theta \in \mathbb{R}^{d}$, and some martingale $M \colonequals \left(M_{t}\right)_{t\geq 0}$ on $E$. 
We assume to work with the c\`{a}dl\`{a}g version of $X$. 
Next, $A$ is defined  as 
\begin{equation}\label{LR_Equ_3}
A_{t} \colonequals \exp\!\left(-\alpha t\right) \left(\phi + \psi^{\top}\!X_{t}\right)\,,\ t\geq 0\,,
\end{equation}
with $\phi \in \mathbb{R}$ and $\psi \in \mathbb{R}^{d}$ such that $\phi + \psi^{\top}\!x > 0$ for all $x \in E$, and $\alpha \in \mathbb{R}$. 
It holds $\alpha = \sup_{x \in E} \frac{k\,\psi^{\top}\left(\theta - x\right)}{\phi + \psi^{\top}\!x}$ to guarantee non-negative short rates (cf.\,\,equation (6) of \citet{Filipovic2014}).
Then, equations \eqref{LR_Equ_1}, \eqref{LR_Equ_2}, \eqref{LR_Equ_3}, together with the fact that $P\!\left(T,T\right) =1$ for all $T \geq 0$, lead to
\begin{equation}\label{LR_Equ_4}
 P\!\left(t,T\right) = \frac{\left(\phi + \psi^{\top}\theta\right)\exp\!\left(-\alpha\left(T\!-\!t\right)\right) + \psi^{\top}\!\left(X_{t} - \theta\right)\exp\!\left(-\left(\alpha\! +\! k\right)\left(T\!-\!t\right)\right)}{\phi + \psi^{\top}X_{t}}
\end{equation}
for all $0 \leq t \leq T$.
Hence, $P_{t} = 0$ $\mathbb{P}$-a.s.\,\,for all $t\geq 0$ and we know by the ratio test that for all $t\geq 0$
\small
\begin{equation*}
\alpha_{\infty}\!\left(t\right) \colonequals \sum_{i=1}^{\infty} \exp\!\left(- \alpha \left(T_{i}\! -\! t\right)\right) < +\infty \,,\ \ \ \ \beta_{\infty}\!\left(t\right) \colonequals \sum_{i=1}^{\infty} \exp\!\left(- \left(\alpha + k\right)\left(T_{i}\! - \! t\right)\right) < +\infty\,.
\end{equation*}
\normalsize
Then for all $t\geq 0$ $\mathbb{P}$-a.s.
\begin{equation}\label{LR_Equ_5}
S_{\infty}\!\left(t\right) 
\overset{\eqref{LR_Equ_4}}{=} \frac{\left(\phi + \psi^{\top}\!\theta\right)\alpha_{\infty}\!\left(t\right) + \psi^{\top}\!\left(X_{t} - \theta\right) \beta_{\infty}\!\left(t\right)}{\phi + \psi^{\top}\!X_{t}} < +\infty\,.
\end{equation}
It follows by Proposition \ref{Prop_5} that for all $t\geq 0$ $\mathbb{P}$-a.s.
\begin{equation*}
 R_{t} \overset{\eqref{LR_Equ_4}}{\underset{\eqref{LR_Equ_5}}{=}} \frac{\phi + \psi^{\top}X_{t}}{\delta \left(\left(\phi + \psi^{\top}\!\theta\right) \alpha_{\infty}\!\left(t\right) + \psi^{\top}\!\left(X_{t} - \theta\right) \beta_{\infty}\!\left(t\right)\right)}\,.
\end{equation*}

Finally, we want to know the form of the long-term yield in the linear-rational term structure methodology. 
We define $y \colonequals \phi + \psi^{\top}\!\theta$ and see that
for all $t\geq 0$ holds
 \begin{equation*}
  \log\!\left[y + \psi^{\top}\left(\sup_{0\leq s\leq t} X_{s}\! -\! \theta\right)e^{-k\left(T - t\right)}\right] 
  \geq \log\!\left[y + \psi^{\top}\left(X_{t}\! -\! \theta\right)e^{-k\left(T - t\right)}\right]
 \end{equation*}
 as well as 
 \begin{equation*}
   \log\!\left[y + \psi^{\top}\left(X_{t}\! -\! \theta\right)e^{-k\left(T - t\right)}\right]
  \geq \log\!\left[y + \psi^{\top}\left(\inf_{0\leq s\leq t} X_{s}\! -\! \theta\right)e^{-k\, T}\right].
 \end{equation*}
This yields $\mathbb{P}$-a.s.\,\,for all $t\geq 0$
\small
\begin{align*}
 \sup_{0\leq s\leq t} \left\vert \alpha + \frac{\log\! P\!\left(s,T\right)}{T}\right\vert & \overset{\eqref{LR_Equ_4}}{=} 
 \sup_{0\leq s\leq t} \left\vert \alpha \frac{s}{T} + \frac{1}{T} \log\!\left[y + \psi^{\top}\left(X_{s}\! -\! \theta\right)e^{-k\left(T - s\right)}\right] \right\vert \nonumber\\
   & \overset{\phantom{\eqref{LR_Equ_4}}}{\leq} \sup_{0\leq s\leq t} \left\vert \alpha \frac{s}{T} \right\vert + 
   \sup_{0\leq s\leq t}  \frac{1}{T} \left\vert\log\!\left[ y + \psi^{\top}\left(X_{s}\! -\! \theta\right)e^{-k\left(T - s\right)}\right]\right\vert \nonumber\\
  & \overset{T \rightarrow \infty}{\longrightarrow} 0
\end{align*}
\normalsize
because $\sup_{0\leq s\leq t} X_{s} < \infty$ $\mathbb{P}$-a.s.\,\,for all $t\geq 0$ since $X$ is c\`{a}dl\`{a}g. 
 Hence, we have for all $t\geq 0$ that 
 $\lim_{T \rightarrow \infty} \sup_{0\leq s\leq t} Y\!\left(s,T\right) = \alpha$ $\mathbb{P}$-a.s., consequently in probability, 
 i.e.\,\,we get $\ell_{t} = \alpha$ $\mathbb{P}$-a.s.\,\,for all $t\geq 0$. 
 In case of $\alpha$ positive, the long-term simple rate explodes due to Theorem \ref{Prop_8}.

\section{Application: Valuation of the Non-Optional Component of a CoCo Bond}\label{cali}
In this section we present an application of our results on the long-term swap rate to evaluate the non-optional component of a CoCo bond. \newline
Several banks issued CoCo bonds with perpetual characteristics in recent years. They are perpetual in the sense that the time to maturity is unbounded if the option for conversion is not executed. Hence, this kind of financial product can be understood as a perpetual floating rate bond combined with an embedded option. For investors it is crucial to know the value of the option and the non-optional component to make an informed investment decision. We calibrate a model for the long-term swap rate, and use the resulting specification to compute the price of the perpetual floating rate bond corresponding to the non-optional component of the CoCo bond.
In particular, we  consider the CoCo bond with ISIN $XS1002801758$ issued by Barclays, see \citet{cocoTermSheet2}.\newline
In  the same setting as in Section \ref{Fle_Hugh}, we assume that the strictly positive martingale $M=(M_t)_{t\geq 0}$ in \eqref{FH_Equ_2} satisfies
\begin{equation}\label{M_spec}
dM_t= \sigma_t M_t dW_t, \qquad M_0=1,
\end{equation}
where $W=(W_t)_{t\geq 0}$ is a one-dimensional Brownian motion on $\left(\Omega, \mathcal{F}, \mathbb{F}, \mathbb{P}\right)$ and $\sigma=(\sigma_t)_{t\geq 0}$ is a deterministic continuous function such that  $\int_0^{+\infty} \sigma_s^2 ds < \infty$.
For the functions $f,g$ in \eqref{FH_Equ_3}, we set $f(t)=k_1e^{-\alpha t}, g(t)=k_2e^{-\beta t}$, $t\geq 0$. We consider $T_0,T_1,\cdots$ such that $T_i-T_{i-1}=\delta$ for all $i=1, \cdots$. In particular we choose here $\delta= 3$ months, which is a typical interval between maturities for swaps on real markets. 

As first step, we estimate a term structure for the discount factors. This is achieved by considering market data of overnight indexed swap on December $22^{nd}$, 2016. From the market prices of overnight indexed swap we bootstrap a term structure of discount factors by relying on the Finmath Java library (see \citet{finmath}). Secondly, we estimate the parameters for the functions $f,\  g,\ F,$ and $G$  in \eqref{eq_bond_0} by minimizing the squared distance between the term structure of zero coupon bonds obtained from the bootstrap and the right hand side of \eqref{eq_bond_0}. We obtain $k_1 =0.4894723$ and $\alpha=0.1536072$ for the function $f$, and $k_2= 8.6235042$ and $\beta=0.0117588$ for the function $g$. By using this result, we compute $F=165.95163$ and $G=11742.367$ by evaluating the sums  in 
\eqref{FH_Equ_4} along a time discretization with a horizon of 1000 years.\\
Concerning the volatility function $\sigma$ in \eqref{M_spec}, we set $\sigma_t=e^{-\lambda t}$.  For the estimation of the parameter $\lambda$, we rely on Remark~\ref{consol}. Since  the long term swap rate is proportional to the consol rate of a perpetual bond, we  use time series data of a consol bond to estimate the missing volatility parameter. We consider the yield of the perpetual Bond with Isin $BMG7498P3093$ and perform a maximum likelihood estimation, obtaining $\lambda=0.0748829$.

With the given full specification of the process $R$, we then estimate the value of the non-optional part of the CoCo Bond $XS1002801758$. Let $P_{NO}(T_0)$ denote the price of the non-optional part at time $T_0$. From the term sheet of the CoCo, we observe that the investor initially receives an $8\%$ fixed coupon up to 2020, where no conversion is possible. On the other side, when the claim starts exhibiting an optionality feature, the investor receives $6.75\%$ plus the mid market swap rate. 

A simple estimate for the non-optional component of the stream of payments involved in the CoCo is simply given by considering the following perpetual bond with unit notional:

\begin{align*}
P_{NO}(T_0)=\sum_{i=0}^{\mathcal{N}}P(T_0,T_i)\left(R_{T_i}+S\right)
\end{align*}
where we truncate the infinite sum up to $\mathcal{N}=50$. The long term estimate of the Euribor-Eonia spread $S= 0.0011$ is obtained from the bootstrapped curves by assuming a constant Euribor-Eonia Spread for maturities larger than 50 years.  Using \eqref{FH_Equ_6} we simulate the dynamics of the long term swap rate $R$ and obtain $P_{NO}(T_0)=0.1969$ by a Monte Carlo simulation. By using  the market price $P_{MKT}(T_0)=1.05386$  of $XS1002801758$ observed on December $22$, 2016,  we immediately deduce our estimate of the value of the embedded optional part, which is equal to $0.85695$. Such a result shows that the price of the CoCo bond is mainly driven by the embedded option. As the option is written on a unit notional, it can be concluded that the option tends to a position on the underlying.

%
%

\appendix

\section{Behaviour of $S_{\infty}$}\label{Infinite_Sum_of_Bond_Prices}

For the study of the long-term swap rate in Section \ref{Long_Term_Rates_Section} as well as 
of the relations among the different long-term interest rates in Section \ref{Relation_between_Long_Term_Rates} we need to obtain some results on the infinite sum $S_{\infty}$ of bond prices defined in \eqref{Sum_Infinite_Bond_Prices}. 
We recall that we consider a tenor structure with $c < \sup_{i \in \mathbb{N} \setminus \left\{0\right\}} (T_{i}-T_{i-1}) < C$ for some $c, C \in \mathbb{R}_{+}, c < C$.
The next two statements give insight about the relation between the long-term zero-coupon bond prices and the asymptotic behaviour of the sum of these prices, 
whereas Lemma \ref{Lemma_SimpleRate} tells us that the long-term simple rate vanishes if $P$ explodes.

\begin{prop}\label{Lemma_4}
If $S_{n} \overset{n \rightarrow \infty}{\longrightarrow} S_{\infty}$ in ucp, then $P_{t} = 0$ $\mathbb{P}$-a.s.\,\,for all $t\geq 0$.
\end{prop}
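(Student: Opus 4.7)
The strategy is to read off the decay of individual bond prices from the telescoping increments of the partial sums $S_n$, exploiting the elementary fact that any ucp--convergent sequence is Cauchy in ucp.

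First I would note that, since $S_n \overset{n \to \infty}{\longrightarrow} S_\infty$ in ucp, the triangle inequality gives, for every $t \geq 0$,
\begin{equation*}
\sup_{0 \leq s \leq t} \bigl| S_n(s) - S_{n-1}(s) \bigr| \overset{n \to \infty}{\longrightarrow} 0 \text{ in probability.}
\end{equation*}

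Next I would observe that, by the very definition of $S_n$ in \eqref{Sum_Bond_Prices},
\begin{equation*}
S_n(s) - S_{n-1}(s) = \delta_n\, P(s, T_n),
\end{equation*}
where the tenor assumption $c < T_i - T_{i-1} < C$ guarantees $\delta_n \geq c > 0$ for every $n \geq 1$. Dividing the previous display through by $\delta_n$ therefore yields
\begin{equation*}
\sup_{0 \leq s \leq t} P(s, T_n) \overset{n \to \infty}{\longrightarrow} 0 \text{ in probability,}
\end{equation*}
which is precisely the ucp convergence $P(\,\cdot\,, T_n) \to 0$. Since $T_n \uparrow \infty$ as $n \to \infty$ (again thanks to $\delta_n \geq c$), the definition \eqref{Long_Bond} of the long-term bond, together with the c\`{a}dl\`{a}g convention of Remark \ref{Remark_1}, then delivers $P_t = 0$ $\mathbb{P}$-a.s.\ for all $t \geq 0$.

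I do not anticipate any real obstacle. The only point that has to be handled with care is the extraction of individual summands from the convergence of the series: this works here only because the tenor spacings are uniformly bounded away from zero. Should one instead allow $\delta_n \to 0$, the same telescoping argument would no longer force $P(\,\cdot\,, T_n) \to 0$, since a convergent weighted sum with weights tending to zero can perfectly well accommodate non-vanishing summands. Under the standing assumption on the tenor, however, this issue does not arise and the argument above is essentially immediate.
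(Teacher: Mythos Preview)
Your argument is correct and follows essentially the same route as the paper: both proofs exploit the telescoping identity $S_n - S_{n-1} = \delta_n P(\cdot,T_n)$, use the triangle inequality with $S_\infty$ to deduce that $\sup_{0\le s\le t}|S_n(s)-S_{n-1}(s)|\to 0$ in probability, and then divide by $\delta_n\ge c>0$ to force $P(\cdot,T_n)\to 0$ in ucp. Your remark about the necessity of the lower tenor bound $c$ is exactly the point that makes the step $(\ast)$ in the paper's chain of inequalities work.
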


\begin{proof}
From $S_{n}\! \overset{n \rightarrow \infty}{\longrightarrow}\! S_{\infty}$ in ucp it follows that\! $S_{\infty}\!\left(t\right)\! <\! +\infty$ $\mathbb{P}$-a.s.\,for all $t\!\geq \!0$. 
We get for all $\epsilon > 0$ and $t \geq 0$ with $C^{\epsilon, t, n}\! \colonequals\! \left\{\omega \in \Omega: \sup_{0\leq s\leq t} \left\vert P\!\left(s,T_{n}\right) \right\vert \!> \!\epsilon\right\}$ 
\small
\begin{align*}
\mathbb{P}\!\left(C^{\epsilon, t, n}\right) 
& \overset{\phantom{(\ast)}}{\leq} \mathbb{P}\!\left(\sup_{0\leq s\leq t} \left\vert S_{n}\!\left(s\right) - S_{n\!-\!1}\!\left(s\right)\right\vert > \epsilon\, c\right) \nonumber\\
& \overset{\phantom{(\ast)}}{=} \mathbb{P}\!\left(\sup_{0\leq s\leq t} \left\vert S_{n}\!\left(s\right) - S_{\infty}\!\left(s\right) + S_{\infty}\!\left(s\right) - S_{n-1}\!\left(s\right)\right\vert > \epsilon\, c\right) \nonumber\\
& \overset{\phantom{(\ast)}}{\leq} \mathbb{P}\!\left(\sup_{0\leq s\leq t} \left(\left\vert S_{n}\!\left(s\right) - S_{\infty}\!\left(s\right)\right\vert + \left\vert S_{n-1}\!\left(s\right) - S_{\infty}\!\left(s\right)\right\vert\right) > \epsilon\, c\right) \nonumber\\
& \overset{\phantom{(\ast)}}{\leq} \mathbb{P}\!\left(\sup_{0\leq s\leq t} \left\vert S_{n}\!\left(s\right) - S_{\infty}\!\left(s\right)\right\vert + \sup_{0\leq s\leq t}\left\vert S_{n\!-\!1}\!\left(s\right) - S_{\infty}\!\left(s\right)\right\vert > \epsilon\, c\right) \nonumber\\
& \overset{\phantom{(\ast)}}{\leq}  \mathbb{P}\!\left(\left\{\sup_{0\leq s\leq t}\left\vert S_{n}\!\left(s\right) - S_{\infty}\!\left(s\right)\right\vert > \frac{\epsilon\, c}{2}\right\} \cup \left\{\sup_{0\leq s\leq t}\left\vert S_{n\!-\!1}\!\left(s\right) - S_{\infty}\!\left(s\right)\right\vert > \frac{\epsilon\, c}{2}\right\}\right) \nonumber\\
& \overset{(\ast)}{\leq} \mathbb{P}\!\left(\sup_{0\leq s\leq t} \left\vert S_{n}\!\left(s\right) - S_{\infty}\!\left(s\right)\right\vert > \frac{\epsilon\, c}{2}\right) + \mathbb{P}\!\left(\sup_{0\leq s\leq t} \left\vert S_{n\!-\!1}\!\left(s\right) - S_{\infty}\!\left(s\right)\right\vert > \frac{\epsilon\, c}{2}\right) \nonumber\\
& \overset{n \rightarrow \infty}{\longrightarrow} 0
\end{align*}
\normalsize
due to the ucp convergence of $S_{n}$.
Hence $P_{t} = 0$ $\mathbb{P}$-a.s.\,\,for all $t\geq 0$.
\end{proof}

\begin{Corollary}\label{Lemma_3}
If $\mathbb{P}\!\left(P_{t} > 0\right) > 0$ for some $t\geq 0$, then $S_{n} \overset{n \rightarrow \infty}{\longrightarrow} +\infty$ in ucp.
\end{Corollary}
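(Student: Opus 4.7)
The plan is to derive this corollary as the contrapositive of Proposition \ref{Lemma_4}. The text already observes that the monotone limit \eqref{Sum_Infinite_Bond_Prices} of $S_n$ always exists in ucp, either finitely (in which case the limit is denoted $S_{\infty}$) or as $+\infty$. Under the hypothesis $\mathbb{P}(P_t > 0) > 0$ for some $t \geq 0$, I would argue by contradiction: if $S_n \to S_\infty$ in ucp with $S_\infty$ finite, then Proposition \ref{Lemma_4} forces $P_t = 0$ $\mathbb{P}$-a.s.\,for all $t \geq 0$, contradicting the hypothesis.

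More precisely, I would first invoke the dichotomy explicitly: since $S_n(s)$ is non-decreasing in $n$ $\mathbb{P}$-a.s.\,(because all bond prices are strictly positive), the pointwise limit $\lim_n S_n(s)$ exists in $[0,+\infty]$, and monotone convergence plus the right-continuity of $S_n$ in $s$ yields the ucp dichotomy stated after \eqref{Sum_Infinite_Bond_Prices}. Then I would negate the conclusion: suppose $S_n \not\to +\infty$ in ucp; by the dichotomy, $S_n \to S_\infty$ in ucp with $S_\infty(s) < +\infty$ $\mathbb{P}$-a.s.\,for all $s \geq 0$.

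Applying Proposition \ref{Lemma_4} then gives $P_t = 0$ $\mathbb{P}$-a.s.\,for all $t \geq 0$. But the hypothesis says $\mathbb{P}(P_t > 0) > 0$ for some $t$, a contradiction. Hence $S_n \to +\infty$ in ucp, as required.

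There is essentially no obstacle: the work has been done in Proposition \ref{Lemma_4}, and the only ingredient beyond a one-line logical negation is the ucp dichotomy for the monotone sequence $S_n$, which is stated in the excerpt immediately after \eqref{Sum_Infinite_Bond_Prices}. The proof will therefore be no more than a few lines.
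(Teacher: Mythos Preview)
Your proposal is correct and matches the paper's approach exactly: the paper's proof is the single line ``This is a direct consequence of Proposition \ref{Lemma_4},'' and what you have written is precisely the unpacking of that sentence via the ucp dichotomy stated after \eqref{Sum_Infinite_Bond_Prices} and contraposition.
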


\begin{proof}
This is a direct consequence of Proposition \ref{Lemma_4}.
\end{proof}

\begin{Lemma}\label{Lemma_SimpleRate}
If $P = +\infty$, it follows $L_{t} = 0$ $\mathbb{P}$-a.s.\,\,for all $t\geq 0$.
\end{Lemma}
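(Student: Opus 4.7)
The plan is to exploit the explicit formula
\[
L(s,T_n) = \frac{1}{T_n-s}\left(\frac{1}{P(s,T_n)}-1\right)
\]
and show that both factors work in our favour: the prefactor $1/(T_n-s)$ shrinks deterministically at rate $1/T_n$, while $P(s,T_n)\to+\infty$ ucp forces $1/P(s,T_n)$ to be uniformly bounded (in $s$) with high probability for large $n$. The claim then reduces to a simple ``bounded times small'' argument.

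Concretely, fix $t\ge 0$ and $\varepsilon>0$. Since $T_n-s \ge T_n-t$ whenever $0\le s\le t$, we have the $\omega$-wise bound
\[
\sup_{0\le s\le t}\bigl|L(s,T_n)\bigr|
\;\le\; \frac{1}{T_n-t}\!\left(\sup_{0\le s\le t}\frac{1}{P(s,T_n)}+1\right).
\]
By the improper ucp definition used throughout the paper (cf.\,\eqref{UCP_Def_2_1}), the hypothesis $P=+\infty$ means that for every $M>0$,
\[
\mathbb{P}\!\left(\inf_{0\le s\le t}P(s,T_n)>M\right)\overset{n\to\infty}{\longrightarrow}1,
\]
equivalently $\sup_{0\le s\le t}\frac{1}{P(s,T_n)}\to 0$ in probability. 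Hence, given any $K>0$, on the set $A_n^K\colonequals\{\inf_{0\le s\le t}P(s,T_n)>K\}$ one has the deterministic bound
\[
\sup_{0\le s\le t}\bigl|L(s,T_n)\bigr|\;\le\;\frac{1}{T_n-t}\Bigl(\tfrac{1}{K}+1\Bigr).
\]

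Choosing $n$ large enough so that $\frac{1}{T_n-t}(\frac{1}{K}+1)<\varepsilon$ (possible because $T_n\to\infty$ since $T_i-T_{i-1}>c$), we obtain
\[
\mathbb{P}\!\left(\sup_{0\le s\le t}|L(s,T_n)|>\varepsilon\right)
\;\le\;\mathbb{P}\bigl(\Omega\setminus A_n^K\bigr)\overset{n\to\infty}{\longrightarrow}0.
\]
This yields $L(\cdot,T_n)\to 0$ in ucp, and by the c\`adl\`ag identification argument recalled in Remark \ref{Remark_1} we conclude $L_t=0$ $\mathbb{P}$-a.s.\,\,for all $t\ge 0$.

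There is no real obstacle here: the only subtlety is to recognise that the improper ucp convergence $P(\cdot,T_n)\to+\infty$ translates into a uniform (in $s\in[0,t]$) lower bound on $P(s,T_n)$ with probability tending to one, which is exactly what is needed to dominate the $\sup_s 1/P(s,T_n)$ term. Once that is in place, the conclusion follows from the trivial decay of the factor $1/(T_n-t)$.
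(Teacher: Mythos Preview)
Your proof is correct and follows essentially the same approach as the paper's own proof, which simply states in one line that the conclusion follows from the definition \eqref{Simple_def} of the simple rate together with Definition \ref{UCP_Def_2} of ucp convergence to $+\infty$. You have merely spelled out the details that the paper leaves implicit: the uniform-in-$s$ lower bound on $P(s,T_n)$ supplied by \eqref{UCP_Def_2_1}, combined with the deterministic decay of $1/(T_n-t)$, forces $\sup_{0\le s\le t}|L(s,T_n)|\to 0$ in probability.
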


\begin{proof}
It follows $L\!\left(\,\cdot\,, T_{n}\right) \overset{n \rightarrow \infty}{\longrightarrow} 0$ in ucp 
by \eqref{Simple_def} and the definition of convergence to $+ \infty$ in ucp (cf.\,\,Definition \ref{UCP_Def_2}). 
\end{proof}

\section{UCP Convergence}\label{appendix_ucp}

The definition of uniform convergence on compacts in probability (ucp convergence) can be found in Chapter II, Section 4 of \citet{Book_Protter}. 
We repeat this here for the reader's convenience. As before we consider a stochastic basis $\left(\Omega, \mathcal{F}, \mathbb{P}\right)$ endowed 
with the filtration $\mathbb{F} \colonequals \left(\mathcal{F}_{t}\right)_{t\geq 0}$ with $\mathcal{F}_{\infty} \subseteq \mathcal{F}$ 
satisfying the usual hypothesis. All processes are adapted to $\mathbb{F}$.

\begin{Def}\label{UCP_Def_1}
 A sequence of processes $\left(X^{n}\right)_{n \in \mathbb{N}}$ with values in $\mathbb{R}^d$ converges to a process $X$ uniformly on compacts in probability if, for each $t > 0$, 
 $\sup_{0\leq s\leq t} \|X_{s}^{n} - X_{s} \|$ converges to $0$ in probability, i.e.\,\,for all $\epsilon > 0$  it holds
 \begin{equation}\label{UCP_Def_1_1}
   \mathbb{P}\!\left(\sup_{0\leq s\leq t} \|X_{s}^{n} - X_{s}\|> \epsilon\right) \overset{n \rightarrow \infty}{\longrightarrow} 0\,.
 \end{equation}
We write $X^{n} \overset{n \rightarrow \infty}{\longrightarrow} X$ in ucp.
\end{Def}

\begin{Theorem}\label{UCP_Thm_1}
  Let $\left(X^{n}\right)_{n \in \mathbb{N}}$ and $\left(Y^{n}\right)_{n \in \mathbb{N}}$ be sequences of real-valued processes .
  If $\left(X^{n},Y^{n}\right)\! \overset{n \rightarrow \infty}{\longrightarrow}\! \left(X,Y\right)$ in ucp with 
  $\sup_{0 \leq s \leq t}\! \left\vert X_{s} \right\vert\! <\! +\infty$ and $\sup_{0 \leq s \leq t}\! \left\vert Y_{s} \right\vert\! <\! +\infty$ $\mathbb{P}$-a.s.\,\,for all $t\geq 0$, then 
  $f\!\left(X^{n},Y^{n}\right) \overset{n \rightarrow \infty}{\longrightarrow} f\!\left(X,Y\right)$ in ucp for all 
  $f : \mathbb{R}^{2} \rightarrow \mathbb{R}$ continuous.
\end{Theorem}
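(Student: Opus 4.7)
The overall strategy is a standard localisation-plus-uniform-continuity argument. Since $f$ is merely continuous, not uniformly continuous, I cannot directly turn closeness of $(X^n,Y^n)$ to $(X,Y)$ into closeness of $f(X^n,Y^n)$ to $f(X,Y)$. What saves the argument is the a.s.\,\,boundedness assumption on $\sup_{0\leq s\leq t}|X_s|$ and $\sup_{0\leq s\leq t}|Y_s|$, which lets me confine the trajectories to a compact set with high probability and then use uniform continuity of $f$ there.

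Concretely, fix $t>0$ and $\varepsilon,\eta>0$. First I would choose $M>0$ large enough so that
\begin{equation*}
\mathbb{P}\!\left(\Omega\setminus\Omega_{M}\right) < \eta/2,\qquad \Omega_{M} \colonequals \left\{\sup_{0\leq s\leq t}|X_s|\leq M,\ \sup_{0\leq s\leq t}|Y_s|\leq M\right\},
\end{equation*}
which is possible by the hypothesis. Next, since $f$ is continuous on the compact set $K\colonequals[-2M,2M]^{2}$, it is uniformly continuous there, so there exists $\delta\in(0,M]$ with $|f(x,y)-f(x',y')|<\varepsilon$ whenever $(x,y),(x',y')\in K$ satisfy $|x-x'|+|y-y'|<\delta$.

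Then I would use the ucp convergence $(X^n,Y^n)\to(X,Y)$ to find $N\in\mathbb{N}$ such that for $n\geq N$,
\begin{equation*}
\mathbb{P}\!\left(\sup_{0\leq s\leq t}\bigl(|X^n_s-X_s|+|Y^n_s-Y_s|\bigr)\geq \delta\right)<\eta/2.
\end{equation*}
Call this last event $\Omega_{n,\delta}^{c}$. On $\Omega_{M}\cap\Omega_{n,\delta}$, the key point is that $\delta\leq M$ forces $\sup_{s\leq t}|X^n_s|\leq 2M$ and $\sup_{s\leq t}|Y^n_s|\leq 2M$, so both paths $(X_s,Y_s)$ and $(X^n_s,Y^n_s)$ stay in $K$ for every $s\in[0,t]$. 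Uniform continuity of $f$ on $K$ then gives $\sup_{0\leq s\leq t}|f(X^n_s,Y^n_s)-f(X_s,Y_s)|<\varepsilon$, so for $n\geq N$,
\begin{equation*}
\mathbb{P}\!\left(\sup_{0\leq s\leq t}|f(X^n_s,Y^n_s)-f(X_s,Y_s)|\geq \varepsilon\right) \leq \mathbb{P}(\Omega\setminus\Omega_M)+\mathbb{P}(\Omega_{n,\delta}^{c}) < \eta.
\end{equation*}
Since $\eta$ was arbitrary, the left-hand side tends to $0$, which is exactly ucp convergence of $f(X^n,Y^n)$ to $f(X,Y)$.

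The main (and really only) subtlety I anticipate is ensuring that the approximating trajectories $(X^n,Y^n)$ themselves stay inside a compact set where uniform continuity kicks in, even though the hypothesis only controls $\sup_{s\leq t}|X_s|$ and $\sup_{s\leq t}|Y_s|$. The trick of choosing $\delta\leq M$ handles this automatically: closeness to a bounded path forces the approximating path to be bounded too. The rest is a bookkeeping exercise with the basic ucp definition \eqref{UCP_Def_1_1}.
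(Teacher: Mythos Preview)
Your proof is correct and follows essentially the same localisation-plus-uniform-continuity strategy as the paper: restrict to an event where the limit path is bounded, invoke Heine--Cantor on a compact set, and control the remainder via the ucp hypothesis. If anything, your treatment is slightly more careful than the paper's in explicitly forcing $\delta\leq M$ so that the approximating path $(X^n,Y^n)$ also lies in the compact set $K=[-2M,2M]^2$, a point the paper's proof glosses over.
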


\begin{proof}
 Let us define $\nu_{s}^{n} \colonequals \left(X_{s}^{n},Y_{s}^{n}\right)$, $\nu_{s} \colonequals \left(X_{s},Y_{s}\right)$, and let
 $\left\Vert \cdot \right\Vert$ be the Euclidean norm on $\mathbb{R}^{2}$. We have to show that for all $t\geq 0$ and $\epsilon > 0$ it holds
 \begin{equation}\label{UCP_Thm_1_1}
  \mathbb{P}\!\left(\sup_{0 \leq s \leq t} \left\vert f\!\left(\nu_{s}^{n}\right) - f\!\left(\nu_{s}\right) \right\vert > \epsilon\right) \overset{n \rightarrow \infty}{\longrightarrow} 0.
 \end{equation}
 Let $k \geq 0$. Then for all $t\geq 0$ it holds
 \begin{align}\label{UCP_Thm_1_2}
  \left\{\sup_{0\leq s\leq t} \left\vert f\!\left(\nu^{n}_{s}\right) - f\!\left(\nu_{s}\right)\right\vert > \epsilon\right\} & \subseteq
  \left\{\sup_{0\leq s\leq t} \left\vert f\!\left(\nu^{n}_{s}\right) - f\!\left(\nu_{s}\right)\right\vert > \epsilon, \sup_{0\leq s\leq t} \left\Vert \nu_{s} \right\Vert \leq k\right\} \nonumber\\
     & \phantom{===} \cup \left\{\sup_{0\leq s\leq t} \left\Vert \nu_{s} \right\Vert > k\right\}.
 \end{align}
By the Heine-Cantor theorem (cf.\,\,Theorem A.1.1 of \citet{Book_Canuto_Tabacco}) it follows from $f$ continuous that $f$ is uniformly continuous on any bounded interval and therefore there exists for the given $\epsilon > 0$ a $\delta > 0$ such that
\small
 \begin{align}\label{UCP_Thm_1_3}
  \left\{\sup_{0\leq s\leq t}\! \left\vert f\!\left(\nu^{n}_{s}\right) - f\!\left(\nu_{s}\right)\right\vert > \epsilon, \sup_{0\leq s\leq t}\! \left\Vert \nu_{s} \right\Vert \leq k\right\}
     & \subseteq \left\{\sup_{0\leq s\leq t}\! \left\Vert \nu^{n}_{s} - \nu_{s}\right\Vert > \delta, \sup_{0\leq s\leq t}\! \left\Vert \nu_{s} \right\Vert \leq k\right\} \nonumber\\
     & \subseteq \left\{\sup_{0\leq s\leq t}\! \left\Vert \nu^{n}_{s} - \nu_{s}\right\Vert > \delta\right\}
 \end{align}
 \normalsize
Substituting \eqref{UCP_Thm_1_3} into \eqref{UCP_Thm_1_2} gives us 
\small
\begin{equation}\label{UCP_Thm_1_4}
 \left\{\sup_{0\leq s\leq t}\! \left\vert f\!\left(\nu^{n}_{s}\right) - f\!\left(\nu_{s}\right)\right\vert > \epsilon\right\} \subseteq 
 \left\{\sup_{0\leq s\leq t}\! \left\Vert \nu^{n}_{s} - \nu_{s}\right\Vert > \delta\right\} \cup \left\{\sup_{0\leq s\leq t}\! \left\Vert \nu_{s} \right\Vert > k\right\}.\nonumber
\end{equation}
\normalsize
Hence
\small
\begin{equation}\label{UCP_Thm_1_5}
 \mathbb{P}\!\left(\sup_{0\leq s\leq t}\! \left\vert f\!\left(\nu^{n}_{s}\right) - f\!\left(\nu_{s}\right)\right\vert > \epsilon\right)  \leq 
 \mathbb{P}\!\left(\sup_{0\leq s\leq t}\! \left\Vert \nu^{n}_{s} - \nu_{s}\right\Vert > \delta\right) 
 + \mathbb{P}\!\left(\sup_{0\leq s\leq t}\! \left\Vert \nu_{s} \right\Vert > k\right).
\end{equation}
\normalsize
Since $\sup_{0 \leq s \leq t} \left\vert X_{s} \right\vert < +\infty$ and $\sup_{0 \leq s \leq t} \left\vert Y_{s} \right\vert < +\infty$ $\mathbb{P}$-a.s.\,\,for all $t\geq 0$, 
it holds for all $t\geq 0$ that $\mathbb{P}\!\left(\sup_{0\leq s\leq t} \left\Vert \nu_{s} \right\Vert > k\right) \overset{k \rightarrow \infty}{\longrightarrow} 0$. 
Let first $k \rightarrow \infty$ and then $n \rightarrow \infty$, to obtain \eqref{UCP_Thm_1_1} from \eqref{UCP_Thm_1_5}.
\end{proof}

In order to treat the case of exploding long-term interest rates, we now provide a definition of convergence to $\pm \infty$ in ucp. 

\begin{Def}\label{UCP_Def_2}
  A sequence of real-valued processes $\left(X^{n}\right)_{n \in \mathbb{N}}$ converges to $+ \infty$ uniformly on compacts in probability if, for each $t > 0$ and $M > 0$ 
  it holds 
  \begin{equation}\label{UCP_Def_2_1}
 \mathbb{P}\!\left(\inf_{0\leq s\leq t} X_{s}^{n} > M\right) \overset{n \rightarrow \infty}{\longrightarrow} 1\,.
\end{equation}
We write $X^{n} \overset{n \rightarrow \infty}{\longrightarrow} + \infty$ in ucp. 
\par
Accordingly the sequence of real-valued processes $\left(X^{n}\right)_{n \in \mathbb{N}}$ converges to $- \infty$ uniformly on compacts in probability if, for each $t > 0$ and $M > 0$ 
  it holds 
\begin{equation}\label{UCP_Def_2_2}
 \mathbb{P}\!\left(\sup_{0\leq s\leq t} X_{s}^{n} < -M\right) \overset{n \rightarrow \infty}{\longrightarrow} 1\,.
\end{equation}
Then, we write $X^{n} \overset{n \rightarrow \infty}{\longrightarrow} -\infty$ in ucp. 
\end{Def}

\section*{Acknowledgements}

The authors wish to thank Damir Filipovi\'{c}, Paolo Guasoni, Alexander Lipton, Chris Rogers, and Wolfgang Runggaldier 
for their interesting remarks on this paper during the 7th General AMaMeF and Swissquote Conference, held at EPFL in Lausanne in September 2015. \\
We also thank Tomislav Maras for his help with some simulations.
\bibliography{biblio}
\bibliographystyle{plainnat}

\end{document}